\documentclass[11pt]{article}
\usepackage[T1]{fontenc}
\usepackage[utf8]{inputenc}
\usepackage[margin=1in]{geometry}
\usepackage{lmodern}
\usepackage{amsmath,amssymb,amsfonts,amsthm}
\usepackage{dsfont,graphicx}
\usepackage{bbm}
\usepackage{xcolor}
\usepackage{algorithm}
\usepackage{algpseudocode}
\graphicspath{{./},{./figures/}}
\usepackage{tikz}
\usepackage[mathscr]{euscript}
\usepackage{url}
\usepackage{preprintstyle}
\usepackage[hidelinks]{hyperref}
\hypersetup{pdftitle={Maximum-Entropy Random Walks on Hypergraphs},
  pdfauthor={Anqi Dong, Anzhi Sheng, Xin Mao, Can Chen}}
\allowdisplaybreaks[1]

\newtheorem{lemma}{Lemma}
\newtheorem{proposition}{Proposition}

\newtheorem{problem}{Problem}

\newtheorem{definition}{Definition}

\newcommand{\bp}{{\mathbf p}}

\newcommand{\bu}{{\mathbf u}}
\newcommand{\bv}{{\mathbf v}}

\newcommand{\black}{\color{black}}
\usepackage{cite}

\newcommand{\support}{{\rm Supp}}

\newcommand{\vsp}{\vspace{0.05in}}

\DeclareMathOperator*{\argmin}{arg\,min}

\begin{document}

\title{Maximum-Entropy Random Walks on Hypergraphs\thanks{This work was supported by the National Science Foundation under Grant CMMI-2611908. \textit{(Corresponding author: Can Chen.)}}}
\author{Anqi Dong\thanks{Anqi Dong is with the Department of Decision and Control Systems, the Department of Mathematics, and Digital Future, KTH Royal Institute of Technology, 100 44 Stockholm, Sweden (e-mail: anqid@kth.se).}
\and Anzhi Sheng\thanks{Anzhi Sheng is with the Department of Decision and Control Systems, KTH Royal Institute of Technology, 100 44 Stockholm, Sweden (e-mail:anzhi@kth.se).} 
\and Xin Mao\thanks{Xin Mao is with the School of Data and Information Sciences, University of North Carolina at Chapel Hill, NC 27599, USA (e-mail: xinm@unc.edu).}
\and Can Chen\thanks{Can Chen is with the School of Data and Information Sciences, the Department of Mathematics, and the Department of Biostatistics, University of North Carolina at Chapel Hill, NC 27599, USA (e-mail:canc@unc.edu).}
}

\date{}
\maketitle

\begin{abstract}
Random walks are fundamental tools for analyzing complex networked systems, including social networks, biological systems, and communication infrastructures. While classical random walks focus on pairwise interactions, many real-world systems exhibit higher-order interactions naturally modeled by hypergraphs. Existing random walk models on hypergraphs often focus on undirected structures or do not incorporate entropy-based inference, limiting their ability to capture directional flows, uncertainty, or information diffusion in complex systems. In this article, we develop a maximum-entropy random walk framework on directed hypergraphs with two interaction mechanisms: broadcasting where a pivot node activates multiple receiver nodes and merging where multiple pivot nodes jointly influence a receiver node. We infer a transition kernel via a Kullback--Leibler divergence projection onto constraints enforcing stochasticity and stationarity. The resulting optimality conditions yield a multiplicative scaling form, implemented using Sinkhorn--Schr\"odinger-type iterations with tensor contractions. We further analyze ergodicity, including projected linear kernels for broadcasting and tensor spectral criteria for polynomial dynamics in merging. The effectiveness of our framework is demonstrated with both synthetic and real-world examples. 
\end{abstract}

\begin{keywords}
\textbf{Keywords:} 
Hypergraphs, maximum-entropy random walks, Markov processes, Sinkhorn--Schr\"odinger algorithm, Kullback--Leibler divergence.
\end{keywords}

\section{Introduction}\label{sec:intro}

Walks on networks turn an adjacency structure into a stochastic dynamics by specifying a transition rule, which defines a Markov chain whose path statistics quantify centrality, mixing, and information flow \cite{newman2018networks,boccaletti2006complex}. This approach is computationally attractive because repeated application of a local operator scales to large-scale systems without requiring a detailed mechanistic model \cite{barabasi2013network}. Random walks on graphs have been extensively studied and shown to be effective in numerous applications, including ranking nodes in social networks, modeling diffusion in biological and ecological systems, identifying communities, and predicting information or epidemic spread \cite{xia2019random}. However, many real-world networked systems exhibit higher-order interactions that involve simultaneous relationships among more than two entities. For instance, in evolutionary dynamics, multiway interactions can shift macroscopic outcomes compared to pairwise models \cite{sheng2024strategy}. Classical graph-based random walk methods, limited to pairwise edges, would fail on such networks, motivating the use of hypergraphs as a natural framework for modeling multiway connectivity.

Hypergraphs generalize graphs by allowing a hyperedge to connect more than two nodes, providing a natural representation for systems with higher-order interactions \cite{berge1984hypergraphs}. Numerous graph-theoretic techniques, including entropy, stability, controllability, similarity,  diffusion processes, and system identification, have been extended to hypergraphs \cite{chen2021controllability,chen2020tensor,surana2022hypergraph,liang2025discrete,cui2025sis,10910206,mao2025identification}. Defining random walks on hypergraphs raises fundamental questions: what constitutes a ``transition'' when a hyperedge represents a multiway interaction, and what node-level dynamics should be tracked? One common approach preserves linearity at the node level by defining a random walk operator or a Laplacian-type diffusion that aggregates multiway incidences and reduces to the standard graph case when hyperedges have size two \cite{carletti2020random,lucas2020multiorder,carletti2020dynamical,hayashi2020hypergraph,chitra2019random}. Another approach treats each step as inherently higher-order, so the next state depends on a multiway context and the node marginals evolve under a nonlinear recursion \cite{benson2017spacey,li2019c}. In this setting, tensor spectral theory plays a central role, including Perron--Frobenius-type results for nonnegative tensors and tensor eigenvalue problems in spectral hypergraph theory \cite{chang2008perron,li2013z,chang2009eigenvalue,pearson2014spectral,raftery1985model}. Despite these advances, existing hypergraph random walk models often lack directional semantics or a principled probabilistic framework for capturing uncertainty in multiway interactions.

Maximum-entropy random walks (MERWs) provide a principled approach to defining stochastic dynamics in complex networks. In the graph setting, rather than assigning outgoing probabilities via local normalization, one selects a path measure that maximizes entropy while remaining supported on the network \cite{dong2024network}. This global criterion can significantly influence localization and mixing properties and has been successfully applied to tasks such as ranking and link prediction \cite{burda2009localization,sinatra2011maximal,li2011link}. Recently, analogous questions have been raised for hypergraphs, including how to define MERWs without reducing hyperedges to pairwise edges \cite{traversa2024unbiased}. From a transport and control perspective, these constructions admit a canonical computational form: a broad class of maximum-entropy scheduling problems can be expressed as Kullback--Leibler (KL) divergence projections under marginal-type constraints, leading to Sinkhorn--Schr\"odinger-type multiplicative scaling algorithms \cite{chen2016robust,dong2025data,chen2017efficient,dongnegative}. In the hypergraph setting, the same mechanism applies, except that the constraints now live on higher-order tensors, so the scaling updates involve tensor contractions followed by rescaling. Nevertheless, extending MERWs to directed hypergraphs remains challenging due to the combinatorial complexity of hyperedge interactions and the nonlinearity of the induced node dynamics.

In this article, we develop a MERW framework on directed hypergraphs without reducing the hypergraph to an ordinary graph. Specifically, we fix a reference transition tensor supported on the directed hyperedges and impose two constraints: stochasticity at the hyperedge level and a prescribed stationary distribution at the node level. Among all admissible transition laws, we select the one that minimizes the KL divergence to the reference. The central modeling step is to define what a directed hyperedge represents as an interaction and to derive the induced evolution of node marginals. We focus on two canonical interaction mechanisms. In broadcasting, a pivot node activates multiple receiver nodes, yielding a closed linear recursion summarized by a projected Markov kernel. In merging, multiple pivot nodes resolve into a single receiver node, producing a nonlinear polynomial map on the simplex. Broadcasting aligns with the standard ``walks on hypergraphs'' perspective in that a local event induces an effective transition on nodes \cite{carletti2020random}, while merging naturally connects to long-term behavior analyses of higher-order stochastic processes and nonlinear Markov operators \cite{benson2017spacey,fasino2020ergodicity,neumann2023nonlinear}. Both mechanisms fit the same KL-projection template, and the optimality conditions admit a multiplicative scaling form, reducing higher-order constraints to a small set of scaling potentials that are updated through repeated tensor contractions and rescaling. We further analyze ergodicity for both types of random walks, examining projected linear kernels for broadcasting and tensor-spectral criteria for the polynomial dynamics in merging.

The remainder of the article is organized as follows. Section~\ref{sec:prelim} introduces tensors, directed hypergraphs, and MERWs on graphs. Section~\ref{sec:broadcast} focuses on broadcasting, covering the uniform and layered models, the KL-projection formulation with scaling iterations, and ergodicity analysis via the projected kernel. Section~\ref{sec:merge} addresses merging, including the corresponding inference problem, scaling iterations, and ergodicity conditions. 
Numerical experiments are presented in Section \ref{sec:num}. Section~\ref{sec:dis} extends the framework to directed hypergraphs in which each hyperedge exhibits a many-to-many interaction mechanism. Finally, Section~\ref{sec:conclusion} concludes with a discussion of future research directions.

\section{Preliminaries}\label{sec:prelim}

\subsection{Tensors}

Tensors are a natural generalization of vectors and matrices to higher dimensions \cite{wei2016theory, chen2024tensor,kolda2009tensor}. The order of a tensor, also referred to as its mode number, corresponds to the number of dimensions it possesses. A $k$th-order tensor can be  represented as $\mathscr{T}\in\mathbb{R}^{n_1\times n_2\times \cdots\times n_k}$, where each $n_j$ denotes the size of the tensor along the $j$th mode. A tensor is called cubical if every mode has the same size, i.e., $n_1=n_2=\cdots=n_k$. A cubical tensor is called symmetric if its entries are invariant under any permutation of the indices, i.e., $\mathscr{T}_{j_1j_2\cdots j_k} = \mathscr{T}_{j_{\sigma(1)}j_{\sigma(2)}\cdots j_{\sigma(k)}}$, where $\sigma$ is a permutation of $\{1,2,\dots,k\}$.  A cubical tensor is called front- or back-symmetric if its entries are invariant under any permutation of the first or last $k-1$ indices.

The mode-$p$ tensor vector multiplication of a tensor $\mathscr{T}\in\mathbb{R}^{n_1\times n_2\times \cdots\times n_k}$ with a vector $\textbf{v}\in\mathbb{R}^{n_p}$  is denoted as $\mathscr{T}\times_p\textbf{v}\in\mathbb{R}^{n_1\times n_2\times\cdots\times n_{p-1}\times n_{p+1}\times \cdots\times n_k}$ whose entries are given by
\begin{equation*}
    (\mathscr{T} \times_p \mathbf{v})_{j_1 \cdots j_{p-1} j_{p+1} \cdots j_k}
=
\sum_{j_p=1}^{n_p} \mathscr{T}_{j_1 j_2 \cdots j_k} \, v_{j_p}.
\end{equation*}
This operation can be naturally extended to all modes, yielding a scalar given by 
\begin{equation*}
    \mathscr{T}\times_1 \textbf{v}_1\times_2\cdots\times_k\textbf{v}_k=\mathscr{T}\times_{1,2,\dots,k}\{\textbf{v}_1,\textbf{v}_2,\dots,\textbf{v}_k\}\in\mathbb{R}
\end{equation*}
for $\textbf{v}_p\in\mathbb{R}^{n_p}$. If $\mathscr{T}$ is symmetric and $\textbf{v}_p=\textbf{v}$ for all $p$, the resulting product is known as the homogeneous polynomial associated with $\mathscr{T}$.

\subsection{Directed Hypergraphs}

Hypergraphs are a natural generalization of graphs that allow edges to connect more than two nodes \cite{berge1984hypergraphs}. 
Formally, a hypergraph is defined as a pair $\mathcal{H} = \{\mathcal{V}, \mathcal{E}\}$, where 
$\mathcal{V} = \{v_1,v_2, \dots, v_n\}$ is a finite set of nodes and 
$\mathcal{E} = \{e_1, e_2,\dots, e_m\}$ is a collection of hyperedges, with each hyperedge 
$e_p \subseteq \mathcal{V}$ for all $p$. If all hyperedges contain exactly $k$
nodes, $\mathcal{H}$ is called a $k$-uniform hypergraph. A directed hypergraph extends this notion by assigning an orientation to each hyperedge. Specifically, a directed hyperedge is an ordered pair $e = (e^-, e^+)$, where 
$e^-, e^+ \subseteq V$ denote the tail and head of the hyperedge, respectively, 
and satisfy $e^- \cap e^+ = \emptyset$. This formulation subsumes ordinary directed graphs as a special case when 
$|e^-| = |e^+| = 1$ for all hyperedges, and provides a flexible framework for modeling 
asymmetric higher-order interactions. For simplicity, we focus on directed hypergraphs in which each hyperedge has either a single tail node and multiple head nodes, or multiple tail nodes and a single head node. This structure naturally arises in applications such as information diffusion, regulatory networks, and group-based contagion processes \cite{gallo1993directed}.

A $k$-uniform directed hypergraph with $n$ nodes and one-tail hyperedges can be represented by a $k$th-order, $n$-dimensional front-symmetric  degree-normalized adjacency tensor $\mathscr{A}\in\mathbb R_+^{n\times n\times\stackrel{k}{\cdots} \times n}$ such that
\begin{equation*}
\mathscr{A}_{j_1j_2\cdots j_k}=
\begin{cases}
\displaystyle \frac{1}{d_{j_1} (k-1)!} \quad \text{if }  (\{v_{j_1}\},\{v_{j_2},v_{j_3},\dots,v_{j_k}\})\in\mathcal{E}\\
0 \qquad \qquad \quad \;\; \text{otherwise}
\end{cases},
\end{equation*}
where $d_{j_1}$ denotes the (outgoing) degree of node $v_{j_1}$ such that
$
d_{j_1} = |\{ e \in \mathcal{E} \text{ }|\text{ } e = (\{v_{j_1}\}, e^+) \} |.
$
With this normalization, for each fixed tail index $j_1$, the adjacency tensor satisfies the row-stochastic constraint, i.e., 
\begin{equation*}
    \sum_{j_2=1}^n\sum_{j_3=1}^n\cdots\sum_{j_k=1}^n \mathscr{A}_{j_1j_2\cdots j_k}=1.
\end{equation*}
The degree-normalized adjacency tensor for a directed hypergraph with one-head hyperedges can be defined analogously, so that the resulting tensor $\mathscr{A}$ is back-symmetric. Finally, non-uniform directed hypergraphs are represented by combining adjacency tensors of different orders. 
Equivalently, a family of tensors 
$\{\mathscr{A}^{(k)}\}_{k=2}^{K}$ is considered, with each tensor corresponding to hyperedges of size $k$ up to a maximum size $K$.

\subsection{MERWs on Graphs}
On a directed graph with $n$ nodes, a time-homogeneous Markov chain \cite{norris1998markov,levin2017markov} is specified by a row-stochastic transition matrix $\textbf{T}  \in \mathbb{R}_+^{n \times n}$ satisfying $\sum_{j=1}^n \textbf{T}_{ij} = 1$ for all $i$. Let $\textbf{p}_t \in \mathbb{R}_+^n$ denote the state distribution at time $t$, and its evolution is given by $\textbf{p}_{t+1}=\textbf{T}^\top\textbf{p}_t$. A stationary distribution $\textbf{p}\in\mathbb{R}^n_{++}$ satisfies $\textbf{p} = \textbf{T}^\top \textbf{p}$. If the chain is irreducible and aperiodic, then the stationary distribution is unique and $\textbf{p}_t \to \textbf{p}$ for any initial condition, with the mixing behavior governed by the spectrum of $\textbf{T}$ \cite{levin2017markov,seneta06}. This convergence property motivates the following standard notion of ergodicity.

\vsp
\begin{definition}[Ergodicity]   
A Markov chain with transition matrix $\textbf{T}$ is called ergodic if it admits a unique stationary distribution $\textbf{p}$ and, for every initial distribution $\textbf{p}_0 \in \mathbb{R}_+^n$ satisfying $\textbf{1}^\top \textbf{p}_0 = 1$ (where \textbf{1} denotes the all-one vector), the sequence $\{\textbf{p}_t\}$ generated by $\textbf{p}_{t+1} = \textbf{T}^\top \textbf{p}_t$ converges to $\textbf{p}$ as $t \to \infty$. Equivalently, $\textbf{T}$ is ergodic if it is primitive, meaning that there exists an integer $p \ge 1$ such that $\textbf{T}^p$ has strictly positive entries.
\end{definition}
\vsp

A common local choice is the unbiased random walk, in which the outgoing degrees at each node are normalized, i.e., the transition matrix is the degree-normalized adjacency matrix. In contrast, MERWs replace local normalization with a global entropy principle on path measures \cite{burda2009localization}. In the formulation adopted here, a nonnegative reference matrix $\textbf{A}$ encodes admissible transitions (e.g., the degree-normalized adjacency matrix), and the transition matrix $\textbf{T}$ is selected by a KL projection under row-stochasticity and stationarity constraints \cite{chen2016robust,chen2017efficient}. Specifically,  the entry-wise KL divergence is defined as
\begin{align*}
D(\textbf{T}_{ij}, \textbf{A}_{ij}) =
\begin{cases}
\displaystyle \textbf{T}_{ij} \log\!\left(\frac{\textbf{T}_{ij}}{\textbf{A}_{ij}}\right) \quad \text{if }\textbf{A}_{ij} > 0,\\
+\infty \quad \text{if }\textbf{A}_{ij} = 0 \text{ and } \textbf{T}_{ij} > 0,
\end{cases}
\end{align*}
with the convention $0 \log(0/0) = 0$, and solve
\begin{align*}
\argmin_{\textbf{T}} \sum_{i=1}^n\sum_{j=1}^n D(\textbf{T}_{ij}, \textbf{A}_{ij}),
\end{align*}
subject to the constraints that each row of $\textbf{T}$ sums to 1 and that the distribution $\textbf{p}$ is stationary under $\textbf{T}$, i.e., $\textbf{T}^\top \textbf{p} = \textbf{p}$. The resulting optimality conditions have a multiplicative scaling form and can be computed efficiently using Sinkhorn-type iterations \cite{sinkhorn1967concerning,peyre2019computational,chen2016robust}. 

In the following, we generalize the concept of MERWs to directed hypergraphs, which involve two distinct interaction mechanisms that induce two types of node-marginal dynamics: a linear Markov recursion corresponding to broadcasting interactions, and a nonlinear polynomial recursion corresponding to merging interactions.
\begin{figure}[t]
    \centering
    \includegraphics[width=0.6\linewidth]{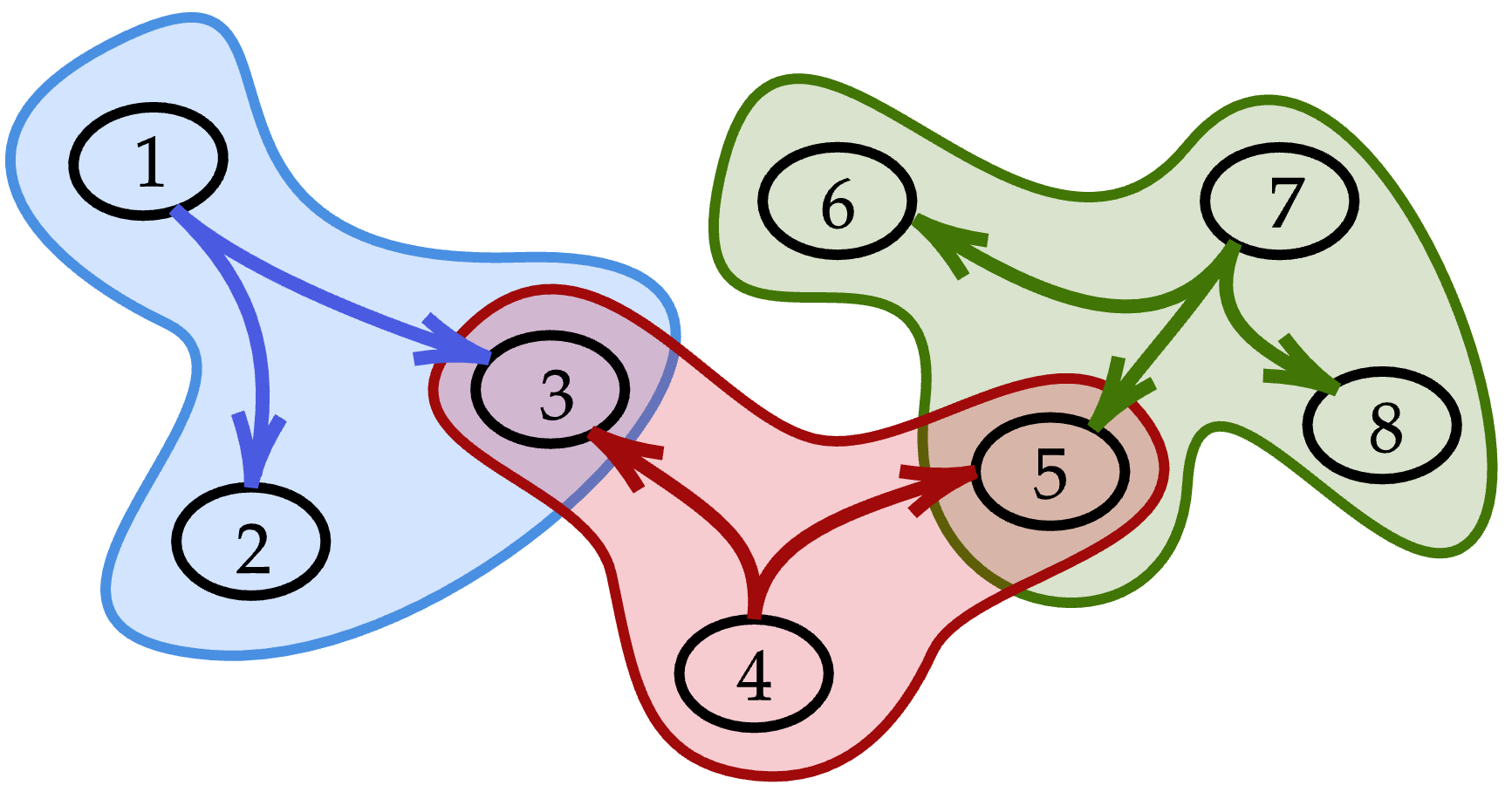}
    \caption{Broadcasting MERWs on directed hypergraphs. Each colored region represents a directed hyperedge, with arrows indicating activation from a pivot (tail) node to a set of receiver (head) nodes (e.g., $v_1\rightarrow \{v_2,v_3\}$).}
    \label{fig:1}
\end{figure}

\section{MERWs for Higher-Order Broadcasting}\label{sec:broadcast}
We first consider the broadcasting setting, in which each directed hyperedge represents a one-to-many transition. This formulation is standard in the random-walk literature: the Markov chain is defined at the node level, and transitions along hyperedges are obtained by projecting each set-valued event onto pairwise node transitions, e.g.,~\cite{carletti2020random,carletti2021random,mulas2022random}. Such higher-order random walks naturally describe processes in social learning and evolutionary game theory~\cite{sheng2024strategy,alvarez2021evolutionary,wang2025emergence,wang2026strategy}, corresponding to simple contagion dynamics under neutral drift, where individuals engage in multi-player interactions, e.g., public goods games~\cite{santos2008social}, and update their states by imitating neighbors. Each imitation interaction in a hyperedge can thus be decomposed into multiple pairwise interactions between a focal individual and several learning individuals.  Here, we aim to retain an explicit broadcasting model and infer its transition law via a maximum-entropy principle.
{ More specifically, the higher-order object being inferred is the conditional law of a complete receiver set given the pivot. The projected node-level kernel retains only a one-receiver marginal, so distinct event-level laws may induce the same linear Markov chain while assigning different probabilities to joint receiver sets. Thus, although the induced node-level dynamics are linear, the optimization retains the higher-order event structure. In contrast to approaches that operate on a projected node-level kernel or an expanded node--hyperedge state space, our method performs the KL projection directly on the directed event tensor subject to its original hypergraph support, stochasticity, and prescribed-stationarity constraints. The node-level Markov kernel is derived only afterward by marginalizing the optimized event-level transition law.}

\subsection{Broadcasting MERWs on Uniform Hypergraphs}

Consider a $k$-uniform directed hypergraph with $n$ nodes, in which each hyperedge has a single tail node and $k-1$ head nodes. The broadcasting transition tensor is a nonnegative, $k$th-order, $n$-dimensional front-symmetric  tensor $\mathscr{B}\in\mathbb R_+^{n\times n\times\stackrel{k}{\cdots} \times n}$ such that for each pivot (tail) node $v_{j_1}$, the entries $\mathscr{B}_{j_1 j_2 \cdots j_k}$ encode the probability of activating the receiver (head) node set
$\{v_{j_2},v_{j_3},\dots,v_{j_k}\}$. This definition is analogous to the degree-normalized adjacency tensor, but unlike the latter, the transition probabilities are not determined by node degrees.

\begin{definition}[Broadcasting]
Let $\mathscr{B}\in\mathbb R_+^{n\times n\times\stackrel{k}{\cdots} \times n}$ be a broadcasting transition tensor and  $\textbf{p}_t \in \mathbb{R}_+^n$ be the state distribution  at time $t$. The broadcasting dynamics, which induce the node-level recursion, is defined as
\begin{equation}
\textbf{p}_{t+1} = \mathscr{B} \times_1 \textbf{p}_t \times_{3,4,\dots,k} \{\textbf{1},\textbf{1},\stackrel{k-2}{\dots},\textbf{1}\},
\end{equation}
where the mode-1 multiplication propagates the probabilities of the pivot node, and the subsequent multiplication over modes $3$ through $k$ sums over all possible choices of the remaining $k-2$ receiver nodes in each hyperedge.
\end{definition}

The broadcasting dynamics remain linear, even though the underlying interactions are higher-order and set-valued.  To define admissible broadcasting kernels, we impose two key requirements. First, each pivot must distribute one unit of probability mass over the set-valued receivers, enforced via the row-stochastic constraint. Second, the induced node-level recursion must admit a prescribed stationary distribution, ensuring consistency between the hyperedge-level dynamics and the node-level behavior. Specifically, let $\mathcal{X}_+$ denote the cone of nonnegative, $k$th-order, $n$-dimensional front-symmetric tensors. Fix $\textbf{p} \in \mathbb{R}^n_{++}$ with $\textbf{1}^\top \textbf{p} = 1$. We then define the admissible class of broadcasting kernels as
\begin{equation}\label{eq:constraint}
\begin{split}
\mathcal B(\textbf{p})
=
\Big\{\mathscr{B} \in \mathcal X_+ \ \mid \ \mathscr{B}  \times_{2,3,\dots,k}\{\textbf{1},\textbf{1},\stackrel{k-1}{\dots},\textbf{1}\} = \textbf{1},
\mathscr{B}  \times_{1}\textbf{p} \times_{3,4,\dots,k}\{\textbf{1},\textbf{1},\stackrel{k-2}{\dots},\textbf{1}\} = \textbf{p}
\Big\}.
\end{split}
\end{equation}

We now formulate maximum-entropy inference for the broadcasting kernel. Given a hypergraph support tensor $\mathscr{A}$ (e.g., the degree-normalized adjacency tensor) and a target stationary distribution $\textbf{p}$, we select the broadcasting transition tensor $\mathscr{B}$ that minimizes the relative entropy with respect to $\mathscr{A}$ among all kernels satisfying the constraints in \eqref{eq:constraint}.

\begin{problem} \label{prob:1}
Given a $k$-uniform directed hypergraph with a single tail per hyperedge and $n$ nodes, fix a strictly positive distribution $\textbf{p} \in \mathbb{R}^n_{++}$ with $\textbf{1}^\top \textbf{p} = 1$, and let $\mathscr{A} \in \mathbb{R}_+^{n \times n \times \stackrel{k}{\cdots} \times n}$ denote the degree-normalized adjacency tensor. The corresponding broadcasting transition tensor $\mathscr{B} \in \mathbb{R}_+^{n \times n \times \stackrel{k}{\cdots} \times n}$ is obtained by solving
\begin{align}\label{eq:obj1}
\min_{\mathscr{B}\in \mathcal B(\textbf{p})}\ 
\sum_{j_1=1}^n\sum_{j_2=1}^n\cdots\sum_{j_k=1}^n \mathscr{B}_{j_1j_2\cdots j_k}
\log\!\Big(\frac{\mathscr{B}_{j_1j_2\cdots j_k}}{\mathscr{A}_{j_1 j_2\cdots j_k}}\Big),
\end{align}
subject to the support constraint $\support(\mathscr{B}) \subseteq \support(\mathscr{A})$.\footnote{$\support(\mathscr{B}) \subseteq \support(\mathscr{A})$ implies that $\mathscr{B}_{j_1 j_2 \cdots j_k}$ can be positive only if $\mathscr{A}_{j_1 j_2 \cdots j_k}$ is also positive. Equivalently, whenever $\mathscr{A}_{j_1 j_2 \cdots j_k}=0$, we must have $\mathscr{B}_{j_1 j_2 \cdots j_k}=0$, with the convention $\log(0/0)=0$.}
\end{problem}

We solve \eqref{eq:obj1} by optimizing directly over the broadcasting transition tensor $\mathscr{B}$. Since both constraints in \eqref{eq:constraint} are expressed through tensor contractions, the resulting KL projection admits a multiplicative scaling structure analogous to that of classical matrix scaling. To make this multiplicative structure explicit, we introduce a pivot-weighted reference tensor supported on the hyperedges defined as 
\begin{equation*}\label{eq:prior-weighted-uniform}
\mathscr{K}_{j_1j_2\cdots j_k}=\textbf{p}_{j_1}\mathscr{A}_{j_1j_2\cdots j_k},
\quad
\support(\mathscr{K})=\support(\mathscr{A}).
\end{equation*}
This reweighting does not alter the optimizer of \eqref{eq:obj1}. Indeed, since the row-stochasticity constraint  implies that, for each pivot index $j_1$, the probabilities over all receiver sets sum to one, we obtain
\begin{align*}
\sum_{j_1=1}^n\sum_{j_2=1}^n\cdots\sum_{j_k=1}^n\mathscr{B}_{j_1j_2\cdots j_k}\log\!\Big(\frac{\mathscr{B}_{j_1j_2\cdots j_k}}{\mathscr{A}_{j_1j_2\cdots j_k}}\Big)
=
\sum_{j_1=1}^n\sum_{j_2=1}^n\cdots\sum_{j_k=1}^n\mathscr{B}_{j_1j_2\cdots j_k}\log\!\Big(\frac{\mathscr{B}_{j_1j_2\cdots j_k}}{\mathscr{K}_{j_1j_2\cdots j_k}}\Big)+\sum_{j_1=1}^n\log \textbf{p}_{j_1},
\end{align*}
where the final term is constant over $\mathcal{B}(\textbf{p})$. Consequently, the KL projection onto $\mathcal{B}(\textbf{p})$ is identical whether the reference tensor is $\mathscr{A}$ or $\mathscr{K}$.

\begin{proposition}\label{prop:1}
Assume that the constraint set $\mathcal B(\textbf{p})$ is nonempty. Problem~\ref{prob:1} admits a unique optimal broadcasting transition tensor $\mathscr{B}^\star \in \mathcal B(\textbf{p})$. Moreover, there exist strictly positive vectors $\textbf{u}, \textbf{v} \in \mathbb R^n_{++}$ such that $\mathscr{B}^\star$ admits the multiplicative factorization 
\begin{equation}\label{eq:close-form1}
\mathscr{B}^\star = 
\mathscr{K}\odot\big(\textbf{u}\circ (\textbf{v}\circ\textbf{v}\circ\stackrel{k-1}{\cdots}\circ \textbf{v})\big),
\end{equation}
where $\odot$ and $\circ$ denote entry-wise multiplication and vector outer product, respectively.
\end{proposition}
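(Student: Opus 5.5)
The plan is to treat Problem~\ref{prob:1} as an I-projection (KL projection) of the reference $\mathscr{K}$ onto a linear family, and to argue in three steps: existence, uniqueness, and then the scaling form from first-order optimality. By the computation preceding the proposition, I will work with $\mathscr{K}$ in place of $\mathscr{A}$. The constraints defining $\mathcal B(\textbf{p})$ are finitely many linear equalities together with nonnegativity. The support condition $\support(\mathscr{B})\subseteq\support(\mathscr{A})$ and front-symmetry are linear restrictions as well. Hence the feasible set is a closed convex polytope, and it is bounded because row-stochasticity forces $0\le \mathscr{B}_{j_1\cdots j_k}\le 1$. The objective $\sum \mathscr{B}\log(\mathscr{B}/\mathscr{K})$ is continuous on this polytope under the convention $0\log 0=0$. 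It is also strictly convex, since $x\mapsto x\log(x/c)$ is strictly convex on $[0,\infty)$ for each $c>0$ and the free variables are exactly the entries on $\support(\mathscr{A})$. Nonemptiness (the standing assumption) and compactness give existence, and strict convexity gives uniqueness of $\mathscr{B}^\star$.

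For the factorization, I will parametrize the front-symmetric tensors by their values on the receiver multisets, so that the symmetry constraint is built into the variables. I then form the Lagrangian with a multiplier $\alpha_{j_1}$ for each row-sum constraint and a multiplier $\beta_{j}$ for each stationarity constraint. The stationarity constraint reads
\[
\sum_{j_1,j_3,\dots,j_k}\textbf{p}_{j_1}\mathscr{B}_{j_1 j\, j_3\cdots j_k}=\textbf{p}_j .
\]
It contracts only mode $2$, but by symmetry of $\mathscr{B}$ in the receiver indices, the total contribution $\sum_j\beta_j(\cdots)$ equals
\[
\frac{1}{k-1}\sum_{j_1,\dots,j_k}\textbf{p}_{j_1}\mathscr{B}_{j_1\cdots j_k}\,(\beta_{j_2}+\cdots+\beta_{j_k}).
\]
Setting the derivative with respect to a positive entry to zero gives
\[
\log\frac{\mathscr{B}_{j_1\cdots j_k}}{\mathscr{K}_{j_1\cdots j_k}}+1=\alpha_{j_1}+\frac{\textbf{p}_{j_1}}{k-1}\sum_{i=2}^k\beta_{j_i}.
\]
The $\textbf{p}_{j_1}$ factor multiplying the $\beta$'s is awkward. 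I would remove it by reweighting the stationarity constraint row by row before forming the Lagrangian. Equivalently, I would absorb it into the variable change $\mathscr{Q}=\mathscr{B}\odot(\textbf{p}\circ\textbf{1}\circ\cdots\circ\textbf{1})$, which is the joint (pivot, receiver-set) law. In terms of $\mathscr{Q}$ both constraints become plain marginal constraints: the mode-$1$ marginal of $\mathscr{Q}$ is $\textbf{p}$, and its symmetrized mode-$2$ marginal is $\textbf{p}$. The KL objective changes only by a constant, exactly as in the $\mathscr{A}\to\mathscr{K}$ computation. Now the stationarity condition gives $\mathscr{Q}^\star=\mathscr{K}'\odot(\textbf{u}'\circ\textbf{v}\circ\cdots\circ\textbf{v})$, with $\textbf{u}'=e^{\alpha-1}$ and $\textbf{v}=e^{\beta/(k-1)}$. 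Translating back to $\mathscr{B}^\star$ changes only the pivot factor, which I absorb into $\textbf{u}$; this yields \eqref{eq:close-form1}. Both $\textbf{u}$ and $\textbf{v}$ are exponentials and hence strictly positive.

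The main obstacle is justifying the Lagrange/KKT step. This requires the optimizer to be positive on all of $\support(\mathscr{A})$, or at least the entries that vanish must be handled. Otherwise the factorization holds only on $\support(\mathscr{B}^\star)$, and entries of $\mathscr{K}$ outside it would need $u_{j_1}v_{j_2}\cdots v_{j_k}=0$, which is incompatible with strictly positive $\textbf{u},\textbf{v}$. My first attempt is the standard argument that $\frac{d}{dx}\,x\log x\to-\infty$ as $x\downarrow0$. Suppose some feasible $\mathscr{B}'$ is positive at an entry where $\mathscr{B}^\star$ vanishes. Then moving from $\mathscr{B}^\star$ toward $\mathscr{B}'$ strictly decreases the objective to first order, a contradiction. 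So $\support(\mathscr{B}^\star)$ is the maximal support among feasible tensors, by Csisz\'ar's characterization of I-projections onto linear families. To obtain strict positivity on all of $\support(\mathscr{A})$, I would therefore need (and would state) that $\mathcal B(\textbf{p})$ contains an element with the same support as $\mathscr{A}$. This is the natural reading of ``nonempty'' as a Slater-type condition. On that relative interior, KKT holds with equality multipliers and no active nonnegativity constraints, and the argument above goes through.
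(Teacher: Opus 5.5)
\paragraph{The gap.} It is the step where you remove the factor $\textbf{p}_{j_1}$ multiplying the $\boldsymbol{\beta}$'s. The substitution $\mathscr{Q}_{j_1\cdots j_k}=\textbf{p}_{j_1}\mathscr{B}_{j_1\cdots j_k}$ does turn both constraints into plain marginal constraints. It does not, however, change the objective by a constant. Indeed,
\[
\sum_{j_1,\dots,j_k}\mathscr{Q}_{j_1\cdots j_k}\log\frac{\mathscr{Q}_{j_1\cdots j_k}}{\textbf{p}_{j_1}\mathscr{K}_{j_1\cdots j_k}}
=\sum_{j_1=1}^n\textbf{p}_{j_1}\sum_{j_2,\dots,j_k}\mathscr{B}_{j_1\cdots j_k}\log\frac{\mathscr{B}_{j_1\cdots j_k}}{\mathscr{K}_{j_1\cdots j_k}},
\]
and any other pivot-dependent reference only adds a constant to this. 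The result is a $\textbf{p}$-weighted sum of row divergences, whereas \eqref{eq:obj1} weights every pivot equally.

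The $\mathscr{A}\to\mathscr{K}$ computation worked for a different reason: only the reference changed, and row-stochasticity turned $\sum_{j_2,\dots,j_k}\mathscr{B}_{j_1\cdots j_k}\log\textbf{p}_{j_1}$ into the constant $\log\textbf{p}_{j_1}$. Here the variable itself is rescaled row by row, and no constraint makes the discrepancy constant. ``Reweighting the stationarity constraint row by row'' is not available either. Each stationarity equation couples all pivots, so it can only be rescaled as a whole, which merely rescales $\boldsymbol{\beta}_j$.

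So your argument proves the factorization for a different problem: the one whose objective is the $\textbf{p}$-weighted (joint pivot--receiver, path-space) divergence, not Problem~\ref{prob:1}. For Problem~\ref{prob:1}, your own symmetrized first-order condition gives
\[
\mathscr{B}^\star_{j_1\cdots j_k}=\mathscr{K}_{j_1\cdots j_k}\,\textbf{u}_{j_1}\prod_{i=2}^{k}\textbf{w}_{j_i}^{\textbf{p}_{j_1}},\qquad \textbf{w}_j=e^{\boldsymbol{\beta}_j/(k-1)}.
\]
The exponent depends on the pivot, exactly as in the merging factorization of Proposition~\ref{prop:4}.

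\paragraph{The gap cannot be patched.} For non-uniform $\textbf{p}$ the stated form is false. Take $k=2$, $n=2$ with all four arcs, self-loops included as in the paper's own pairwise layer, so $\mathscr{A}=\tfrac12\textbf{1}\textbf{1}^\top$. Then every $\mathscr{K}\odot(\textbf{u}\circ\textbf{v})$ is rank one, and row-stochasticity plus stationarity force it to equal $\textbf{1}\textbf{p}^\top$. Feasible kernels have rows $(1-a,a)$ and $(b,1-b)$ with $\textbf{p}_1a=\textbf{p}_2b$, so $a\in[0,\min(1,\textbf{p}_2/\textbf{p}_1)]$. The objective is $g(a)=\phi(a)+\phi(\textbf{p}_1a/\textbf{p}_2)+2\log 2$ with $\phi(x)=x\log x+(1-x)\log(1-x)$. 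At the interior point $a=\textbf{p}_2$,
\[
g'(\textbf{p}_2)=\Big(1-\frac{\textbf{p}_1}{\textbf{p}_2}\Big)\log\frac{\textbf{p}_2}{\textbf{p}_1}>0
\quad\text{whenever } \textbf{p}_1\neq\textbf{p}_2,
\]
so $\textbf{1}\textbf{p}^\top$ is not the minimizer.

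The paper's proof fails at the same place. From the condition containing $\textbf{p}_{j_1}\boldsymbol{\beta}_{j_2}$ it sets $\textbf{v}_j=\exp(-\textbf{p}_j\boldsymbol{\beta}_j)$, silently replacing the pivot weight $\textbf{p}_{j_1}$ by $\textbf{p}_{j_2}$. It then extends the factor to modes $3,\dots,k$ by invoking symmetry rather than symmetrizing as you did. Your diagnosis of the awkward factor is therefore correct. The honest conclusion, though, is that the claim holds only for uniform $\textbf{p}$ or for the $\textbf{p}$-weighted objective.

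\paragraph{Positivity.} Your remark here is right, and the paper also omits it. Under mere nonemptiness, $\mathscr{B}^\star$ can vanish on part of $\support(\mathscr{A})$. A relative-interior assumption, as in Proposition~\ref{prop:4}, is therefore needed for any factorization with strictly positive scalings.
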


\begin{proof}
Since the admission set $\mathcal B(\textbf{p})$ is nonempty, it is convex and closed, as it is defined by linear equalities together with the nonnegativity constraint $\mathscr{B} \ge 0$ and the support restriction $\support(\mathscr{B}) \subseteq \support(\mathscr{A})$. Moreover, the row-stochasticity constraint ensures that $\mathcal B(\textbf{p})$ is bounded, and hence compact. The objective in \eqref{eq:obj1} is lower semicontinuous on $\mathbb{R}_+^{n\times n\times \stackrel{k}{\cdots}\times n}$ and finite on $\mathcal B(\textbf{p})$ under the support restriction, which guarantees the existence of a minimizer. Furthermore, since the function $x \mapsto x \log(x/c)$ is strictly convex on $x \ge 0$ for any $c>0$, the objective is strictly convex over $\support(\mathscr{A})$, implying that the minimizer over the convex set $\mathcal B(\textbf{p})$ is unique.

Recall that replacing $\mathscr{A}$ by the pivot-weighted prior $\mathscr{K}$ changes the objective in \eqref{eq:obj1} only by an additive constant over $\mathcal B(\textbf{p})$, and thus does not affect the optimizer. Therefore, we equivalently minimize
\begin{equation*}
\sum_{j_1=1}^n\sum_{j_2=1}^n\cdots\sum_{j_k=1}^n\mathscr{B}_{j_1j_2\cdots j_k}
\log\!\Big(\frac{\mathscr{B}_{j_1j_2\cdots j_k}}{\mathscr{K}_{j_1j_2\cdots j_k}}\Big)
\end{equation*}
over $\mathscr{B}\in\mathcal B(\textbf{p})$. Introduce Lagrange multipliers $\boldsymbol{\alpha} \in \mathbb{R}^n$ and $\boldsymbol{\beta} \in \mathbb{R}^n$ corresponding to the two constraints in \eqref{eq:constraint}, and consider the Lagrangian
\begin{align*}
\mathcal L(\mathscr{B},\boldsymbol{\alpha},\boldsymbol{\beta})
&=
\sum_{j_1=1}^n\sum_{j_2=1}^n\cdots\sum_{j_k=1}^n\mathscr{B}_{j_1j_2\cdots j_k}
\log\!\Big(\frac{\mathscr{B}_{j_1j_2\cdots j_k}}{\mathscr{K}_{j_1j_2\cdots j_k}}\Big)+\sum_{j_1=1}^n\boldsymbol{\alpha}_{j_1}\Big(\sum_{j_2=1}^n\cdots\sum_{j_k=1}^n\mathscr{B}_{j_1j_2\cdots j_k}-1\Big)\\
&+\sum_{j_2=1}^n\boldsymbol{\beta}_{j_2}\Big(\sum_{j_1=1}^n\sum_{j_3=1}^n\cdots\sum_{j_k=1}^n\textbf{p}_{j_1}\mathscr{B}_{j_1 j_2\cdots j_k}-\textbf{p}_{j_2}\Big).
\end{align*}
At the optimum, for any index $(j_1,j_2, \dots, j_k) \in \support(\mathscr{K})$, the first-order condition gives
\begin{equation*} 
\frac{\partial \mathcal L}{\partial \mathscr{B}_{j_1j_2\cdots j_k}}
=
\log\!\Big(\frac{\mathscr{B}^\star_{j_1j_2\cdots j_k}}{\mathscr{K}_{j_1j_2\cdots j_k}}\Big)+1
+\boldsymbol{\alpha}_{j_1}
+\textbf{p}_{j_1}\,\boldsymbol{\beta}_{j_2}
=0.
\end{equation*}
Since $\mathscr{B}^\star$ is front-symmetric, exponentiating both sides yields the multiplicative form
\begin{equation*}
\mathscr{B}^\star_{j_1j_2\cdots j_k}
=
\mathscr{K}_{j_1j_2\cdots j_k}\textbf{u}_{j_1}\textbf{v}_{j_2}\textbf{v}_{j_3}\cdots \textbf{v}_{j_k}
\end{equation*}
where $\textbf{u}_{j}=\exp(-1-\boldsymbol{\alpha}_{j})$ and $\textbf{v}_{j}=\exp(-\textbf{p}_{j}\boldsymbol{\beta}_{j})$. Therefore, the result follows immediately. 
\end{proof}

The result ensures that the optimal maximum-entropy broadcasting transition tensor $\mathscr{B}^\star$ is unique and emphasizes that the optimizer preserves the hypergraph topology encoded in $\mathscr{K}$ while adjusting probabilities to satisfy both row-stochasticity and the prescribed node-level stationary distribution. Importantly, its multiplicative, front-symmetric structure provides a natural foundation for efficient iterative algorithms based on alternating rescaling, directly analogous to matrix scaling for KL projections~\cite{chen2016robust}.

\subsection{Extension to Non-Uniform Hypergraphs}

The broadcasting MERWs can be naturally extended to general non-uniform hypergraphs by decomposing them into layers of different uniformities. For each  $k \in \{2,3,\dots,K\}$, let $\mathscr{A}^{(k)} \in \mathbb{R}_+^{n \times n\times \stackrel{k}{\cdots} \times n}$ denote the degree-normalized adjacency tensors of $k$-uniform hyperedges, and let $\mathscr{B}^{(k)} \in \mathbb{R}_+^{n \times n\times \stackrel{k}{\cdots} \times n}$ be the corresponding broadcasting transition tensors. We adopt the same storage convention as in the uniform case, i.e.,  $\mathscr{B}^{(k)}$ are front-symmetric, and given a pivot index $j_1$, the entries $\mathscr{B}^{(k)}_{j_1 j_2\cdots j_k}$ are indexed by an ordered receiver tuple $(j_2,j_3,\dots,j_k)$. To account for the relative importance of different hyperedge sizes in the overall dynamics, we introduce nonnegative layer weights $\{\lambda_k\}_{k=2}^{K}$ with summation equal to one. This layered decomposition preserves the higher-order structure of the original hypergraph while enabling the extension of the maximum-entropy broadcasting framework to heterogeneous hypergraphs with mixed arities.

Let $\textbf{p}_t\in\mathbb R_+^n$ denote the state distribution at time $t$.
Using the same projection rule as in the uniform case, the induced node-level broadcasting dynamics on a non-uniform hypergraph is given by the weighted superposition
\begin{align}\label{eq:nonuniform-dynamics}
\textbf{p}_{t+1}
=
\sum_{k=2}^{K}
\lambda_k\,
\mathscr{B}^{(k)}\times_{1}\textbf{p}_t\times_{3,4,\dots,k}\{\textbf{1},\textbf{1},\stackrel{k-2}{\dots},\textbf{1}\}.
\end{align}
The induced evolution of node marginals remains linear in $\textbf{p}_t$. To define the admissible class of non-uniform broadcasting kernels, let $\mathcal X^{(k)}_+$ denote the cone of nonnegative $k$th-order front-symmetric tensors, and
fix $\textbf{p}\in\mathbb R^n_{++}$ with $\textbf{1}^\top\textbf{p}=1$. The admissible set is then defined as
\begin{equation}\label{eq:constraint2}
\begin{split}
\mathcal B_{\le K}(\bp)
=
\Big\{\{
\mathscr{B}^{(k)}\}_{k=2}^K \ \mid \ \mathscr{B}^{(k)}\in\mathcal X^{(k)}_+,
&\sum_{k=2}^{K}
\lambda_k\,
\mathscr{B}^{(k)}\times_{2,3,\dots,k}\{\textbf{1},\textbf{1},\stackrel{k-1}{\dots},\textbf{1}\}=
\textbf{1},\\
&\sum_{k=2}^{K}
\lambda_k\,
\mathscr{B}^{(k)}\times_{1}\textbf{p}\times_{3,4,\dots,k}\{\textbf{1},\textbf{1},\stackrel{k-2}{\dots},\textbf{1}\}
=
\textbf{p}
\Big\}. 
\end{split}
\end{equation}
Non-uniform broadcasting kernels are inferred via the same maximum-entropy principle as in the uniform case, now with a layer-weighted KL divergence objective that accounts for the relative contribution of each uniform layer.

\begin{problem} \label{prob:nonuniform-prime}
Given a directed hypergraph with a single tail per hyperedge, $n$ nodes, and maximum hyperedge size $K$, fix a strictly positive distribution $\textbf{p} \in \mathbb{R}^n_{++}$ with $\textbf{1}^\top \textbf{p} = 1$, and let $\mathscr{A}^{(k)} \in \mathbb{R}_+^{n \times n \times \stackrel{k}{\cdots} \times n}$ denote the degree-normalized adjacency tensors for $k$-uniform layers. The corresponding broadcasting transition tensors $\mathscr{B}^{(k)} \in \mathbb{R}_+^{n \times n \times \stackrel{k}{\cdots} \times n}$ are obtained by solving
\begin{equation}\label{eq:obj2}
\displaystyle\min_{\{\mathscr{B}^{(k)}\}_{k=2}^K\in \mathcal B_{\leq K}(\textbf{p})}\sum_{k=2}^K\lambda_k\ 
\sum_{j_1=1}^n\sum_{j_2=1}^n\cdots\sum_{j_k=1}^n \mathscr{B}^{(k)}_{j_1j_2\cdots j_k}
\log\!\Big(\frac{\mathscr{B}^{(k)}_{j_1j_2\cdots j_k}}{\mathscr{A}^{(k)}_{j_1 j_2\cdots j_k}}\Big),
\end{equation}
subject to the support constraint $\operatorname{supp}(\mathscr{B}^{(k)}) \subseteq \operatorname{supp}(\mathscr{A}^{(k)})$ for $k=2,3,\dots,K$.
\end{problem}

\begin{proposition}\label{prop:pro2}
Assume that the constraint set $\mathcal B_{\le K}(\textbf{p})$ is nonempty. Problem~\ref{prob:nonuniform-prime} admits a unique set of optimal broadcasting transition tensors $\{\mathscr{B}^{(k)\star}\}_{k=2}^K \in \mathcal B_{\le K}(\textbf{p})$. Moreover, there exist strictly positive vectors $\textbf{u}, \textbf{v} \in \mathbb R^n_{++}$ such that $\mathscr{B}^{(k)\star}$ admits the multiplicative factorization 
\begin{equation}\label{eq:close-form2}
\mathscr{B}^{(k)\star} = 
\mathscr{K}^{(k)}\odot\big(\textbf{u}\circ (\textbf{v}\circ\textbf{v}\circ\stackrel{k-1}{\cdots}\circ \textbf{v})\big),
\end{equation}
where $\mathscr{K}^{(k)}$ denote the pivot-weighted reference tensors for $k$-uniform layers, for $k=2,3,\dots,K$. 
\end{proposition}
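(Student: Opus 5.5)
The argument follows the proof of Proposition~\ref{prop:1}, now applied to the stacked variable $\{\mathscr{B}^{(k)}\}_{k=2}^K$. It has three steps: existence, uniqueness, and the Lagrangian computation.

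\emph{Existence.} By assumption $\mathcal B_{\le K}(\bp)$ is nonempty. It is cut out by linear equalities, nonnegativity, and the support restrictions, so it is convex and closed. It is also bounded: the layer-weighted stochasticity constraint forces every entry of $\lambda_k\mathscr{B}^{(k)}$ to be at most one. Two cases need care here.
\begin{itemize}
\item If $\lambda_k=0$ for some $k$, then $\mathscr{B}^{(k)}$ enters neither the constraints nor the objective. I would either assume $\lambda_k>0$, or restrict the problem to the layers with $\lambda_k>0$, since uniqueness can only hold for those layers.
\item If $\support(\mathscr{A}^{(k)})$ is empty, the variable $\mathscr{B}^{(k)}$ is identically zero.
\end{itemize}
With these cases handled, the feasible set is compact. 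The objective is a positively weighted sum of lower semicontinuous terms $x\log(x/c)$, so a minimizer exists.

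\emph{Uniqueness.} The map $x\mapsto x\log(x/c)$ is strictly convex. Weighting by $\lambda_k>0$ keeps the objective strictly convex in all free entries, so the minimizer over the convex feasible set is unique.

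\emph{Change of reference.} Set $\mathscr{K}^{(k)}_{j_1\cdots j_k}=\bp_{j_1}\mathscr{A}^{(k)}_{j_1\cdots j_k}$. Replacing $\mathscr{A}^{(k)}$ by $\mathscr{K}^{(k)}$ changes the objective by
\[
\sum_k\lambda_k\sum_{j_1}\log\bp_{j_1}\sum_{j_2,\dots,j_k}\mathscr{B}^{(k)}_{j_1\cdots j_k}.
\]
Summing the stochasticity constraint over layers shows that this term equals $\sum_{j_1}\log\bp_{j_1}$, a constant on the feasible set. The optimizer is therefore unchanged.

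\emph{Lagrangian.} I would attach a single multiplier vector $\boldsymbol\alpha$ to the layer-aggregated stochasticity constraint and a single vector $\boldsymbol\beta$ to the aggregated stationarity constraint. Because the constraints couple the layers only through these weighted sums, each multiplier term carries the same factor $\lambda_k$ as the objective. For an entry of layer $k$ in the support, differentiation gives
\[
\lambda_k\Big(\log\frac{\mathscr{B}^{(k)\star}_{j_1\cdots j_k}}{\mathscr{K}^{(k)}_{j_1\cdots j_k}}+1\Big)+\lambda_k\boldsymbol\alpha_{j_1}+\lambda_k\bp_{j_1}\boldsymbol\beta_{j_2}=0.
\]
Dividing by $\lambda_k$ gives exactly the condition from Proposition~\ref{prop:1}, with multipliers independent of $k$. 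This is why a single pair $\bu,\bv$ serves all layers.

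Exponentiating gives
\[
\mathscr{B}^{(k)\star}_{j_1j_2\cdots j_k}=\mathscr{K}^{(k)}_{j_1j_2\cdots j_k}\,\bu_{j_1}\,\bv_{j_2},
\qquad \bu_j=e^{-1-\boldsymbol\alpha_j},\quad \bv_j=e^{-\bp_j\boldsymbol\beta_j}.
\]
Both vectors are strictly positive. To reach the stated form~\eqref{eq:close-form2}, I would follow the paper's use of front-symmetry as in Proposition~\ref{prop:1}. The optimizer must be invariant under permutations of $(j_2,\dots,j_k)$, and the stationarity constraint involves only mode~2, so the mode-2 factor extends symmetrically to $\bv_{j_2}\cdots\bv_{j_k}$.

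\emph{Main obstacle.} The hardest step is justifying the first-order conditions, that is, strong duality and attainment of the multipliers. Two issues arise. First, nonnegativity constraints may be active at entries inside the support. Second, the symmetrized factorization needs its exponent bookkeeping checked. For the first issue I would invoke a Slater-type condition (a feasible point strictly positive on all supports), or use the standard fact that an $I$-projection onto a linear family has the exponential-family form on the support of some feasible point. The second issue is where the argument relies most on the paper's convention from Proposition~\ref{prop:1}, and I would treat it the same way.
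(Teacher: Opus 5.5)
Your outline follows the paper's proof step for step: compactness, strict convexity, pivot reweighting, a single pair $\boldsymbol{\alpha},\boldsymbol{\beta}$ shared by all layers, exponentiation, and an appeal to front-symmetry. Your preliminary refinements are correct: the bound $\lambda_k\mathscr{B}^{(k)}\le 1$, dropping layers with $\lambda_k=0$, and the constant $\sum_{j_1}\log\textbf{p}_{j_1}$.

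\textbf{The gap is the exponentiation step.} This is exactly the ``exponent bookkeeping'' you deferred to the paper's convention. The stationarity multiplier enters the first-order condition as $\textbf{p}_{j_1}\boldsymbol{\beta}_{j_2}$. So $e^{-\textbf{p}_{j_1}\boldsymbol{\beta}_{j_2}}=(e^{-\boldsymbol{\beta}_{j_2}})^{\textbf{p}_{j_1}}$ depends on the pivot as well as the receiver. Defining $\textbf{v}_j=e^{-\textbf{p}_j\boldsymbol{\beta}_j}$ silently replaces $\textbf{p}_{j_1}$ by $\textbf{p}_{j_2}$. The paper's proofs of Propositions~\ref{prop:1} and~\ref{prop:pro2} make the same move, whereas the merging proofs keep the analogous exponent $\omega$.

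This cannot be repaired, because the stated form is false in general. Take a single pairwise layer ($K=2$, $\lambda_2=1$), $n=2$, $\mathscr{A}^{(2)}=\tfrac12\textbf{1}\textbf{1}^\top$ (self-loops, as in the pairwise layer of Section~\ref{sec:num}), and $\textbf{p}=(1/3,2/3)^\top$. Every $\mathscr{K}^{(2)}\odot(\textbf{u}\circ\textbf{v})$ has rank one, and the only rank-one row-stochastic matrix with stationary law $\textbf{p}$ is $\textbf{1}\textbf{p}^\top$. The feasible set is $\{\left[\begin{smallmatrix}1-2b&2b\\ b&1-b\end{smallmatrix}\right]:0\le b\le\tfrac12\}$. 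The derivative of the objective in $b$ is $2\log\frac{2b}{1-2b}+\log\frac{b}{1-b}$, which equals $\log 2\neq 0$ at $b=\tfrac13$, i.e.\ at $\textbf{1}\textbf{p}^\top$. So $\textbf{1}\textbf{p}^\top$ is not optimal. The actual minimizer ($b\approx 0.302$) has the form $\mathscr{K}_{ij}\textbf{u}_i\textbf{w}_j^{\textbf{p}_i}$.

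\textbf{The appeal to symmetry is also not a valid step.} Differentiating entrywise treats the entries as free. With free entries, the mode-2-only stationarity constraint generally yields a non-symmetric optimizer. Forcing symmetry on $\mathscr{K}^{(k)}_{j_1\cdots j_k}\textbf{u}_{j_1}\textbf{v}_{j_2}$ only gives the constraints $\textbf{v}_{j_2}=\textbf{v}_{j_m}$ on the support, not a product $\textbf{v}_{j_2}\cdots\textbf{v}_{j_k}$.

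The correct route is as follows. On receiver-symmetric tensors, the mode-2 stationarity constraint equals its average over modes $2,\dots,k$. The averaged problem is permutation invariant and has a unique, hence symmetric, minimizer. Its entrywise KKT conditions give
\[
\mathscr{B}^{(k)\star}_{j_1\cdots j_k}=\mathscr{K}^{(k)}_{j_1\cdots j_k}\,\textbf{u}_{j_1}\prod_{m=2}^{k}\textbf{v}_{j_m}^{\textbf{p}_{j_1}/(k-1)},\qquad \textbf{v}=e^{-\boldsymbol{\beta}},
\]
with an exponent that depends on both the pivot and the layer, as in Proposition~\ref{prop:5}. This, not \eqref{eq:close-form2}, is what your argument can establish.

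Even this form holds only under the Slater-type hypothesis you mention, and that hypothesis is genuinely extra. With mere nonemptiness, take the directed $3$-cycle $1\to2\to3\to1$ plus the chord $1\to3$, with uniform $\textbf{p}$. The unique feasible kernel has $\mathscr{B}^{(2)}_{13}=0$ although $\mathscr{A}^{(2)}_{13}>0$. So no factorization with strictly positive scalings exists.
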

\begin{proof}
Existence of a solution follows from feasibility and compactness of $\mathcal B_{\le K}(\textbf{p})$, and uniqueness from strict convexity of the KL objective on the prescribed support. The multiplicative form of the optimizer can be derived by introducing Lagrange multipliers $\boldsymbol{\alpha} \in \mathbb{R}^n$ and $\boldsymbol{\beta} \in \mathbb{R}^n$ associated with the two constraints in \eqref{eq:constraint2}, and forming the Lagrangian corresponding to \eqref{eq:obj2}. The first-order optimality condition yields
\begin{equation*}
\lambda_k\Big(\log\!\Big(\frac{\mathscr{B}^{(k)\star}_{j_1j_2\cdots j_k}}{\mathscr{K}^{(k)}_{j_1j_2\cdots j_k}}\Big)+1\Big)
+\lambda_k\boldsymbol{\alpha}_{j_1}
+\lambda_k\textbf{p}_{j_1}\,\boldsymbol{\beta}_{j_2}
=0.
\end{equation*}
Since $\mathscr{B}^{(k)\star}$ are front-symmetric, exponentiating both sides yields the stated multiplicative form for some positive vectors $\textbf{u}$ and $ \textbf{v}$ for $k=2,3,\dots,K$.
\end{proof}
\vsp 

This formulation can be interpreted as a one-step transport problem on a directed hypergraph. Rather than enforcing stationarity, the broadcasting tensors are chosen so that a single broadcasting operation pushes an initial distribution $\textbf{q}\in\mathbb{R}^{n}_{++}$ with $\textbf{1}^\top\textbf{q}=1$ to a prescribed target distribution $\textbf{p}$, aggregated over all hyperedge sizes. The endpoint constraint enforces this mass transfer, while the row-stochastic constraint ensures that each $\mathscr{B}^{(k)}$ defines a valid broadcasting kernel, yielding an entropy-regularized one-step transport guided by the hypergraph structure.

\subsection{Ergodicity}
A key simplification of the broadcasting model is that, although the underlying dynamics arise from higher-order, set-valued interactions, the induced evolution of the node marginals is governed by an ordinary linear time-invariant system on $\mathbb{R}^n$. This reduction makes ergodicity tractable, as mixing and convergence can be analyzed through linear operators rather than the exponentially large tensor space. At the same time, the inference problem remains global: the constraints in \eqref{eq:constraint2} enforce mixture mass conservation and stationarity with respect to the prescribed distribution \textbf{p} by aggregating information across all hyperedge sizes and layers.

Based on the induced node-level non-uniform broadcasting dynamics \eqref{eq:nonuniform-dynamics}, define the projected transition kernel
\begin{equation*}
    \textbf{P} = \sum_{k=2}^{K}
\lambda_k\,
\mathscr{B}^{(k)}\times_{3,4,\dots,k}\{\textbf{1},\textbf{1},\stackrel{k-2}{\dots},\textbf{1}\}\in\mathbb{R}^{n\times n}_{+}.
\end{equation*}
With this definition, the broadcasting dynamics reduce to the linear system $\textbf{p}_{t+1} = \textbf{P}^\top\textbf{p}_{t}$.

\begin{lemma}
    If $\{\mathscr{B}^{(k)}\}_{k=2}^K\in\mathcal B_{\le  K}(\textbf{p})$, the projected transition kernel is row-stochastic and admits \textbf{p} as a stationary distribution, i.e., $\textbf{P}\textbf{1}=\textbf{1}$ and $\textbf{P}^\top\textbf{p}=\textbf{p} $.
\end{lemma}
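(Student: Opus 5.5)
The plan is to verify the two identities by direct index computation, unfolding the definition of $\textbf{P}$ and comparing with the two defining constraints of $\mathcal B_{\le K}(\textbf{p})$ in \eqref{eq:constraint2}. Entrywise, the projected kernel is
\begin{equation*}
\textbf{P}_{j_1j_2}=\sum_{k=2}^K\lambda_k\sum_{j_3=1}^n\cdots\sum_{j_k=1}^n\mathscr{B}^{(k)}_{j_1j_2\cdots j_k},
\end{equation*}
with the convention that for $k=2$ the inner sum is empty, so the summand is just $\mathscr{B}^{(2)}_{j_1j_2}$. Nonnegativity of $\textbf{P}$ is immediate from $\lambda_k\ge 0$ and $\mathscr{B}^{(k)}\in\mathcal X^{(k)}_+$.

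For row-stochasticity, I will compute $(\textbf{P}\textbf{1})_{j_1}=\sum_{j_2}\textbf{P}_{j_1j_2}$. Interchanging the finite sums over $k$ and over $j_2$ turns this into $\sum_k\lambda_k\sum_{j_2,\dots,j_k}\mathscr{B}^{(k)}_{j_1\cdots j_k}$. In tensor notation this is $\sum_k\lambda_k\,\mathscr{B}^{(k)}\times_{2,\dots,k}\{\textbf{1},\dots,\textbf{1}\}$ evaluated at index $j_1$. By the first constraint in \eqref{eq:constraint2}, this equals $1$.

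For stationarity, I will compute $(\textbf{P}^\top\textbf{p})_{j_2}=\sum_{j_1}\textbf{p}_{j_1}\textbf{P}_{j_1j_2}$. Again interchanging sums gives $\sum_k\lambda_k\sum_{j_1,j_3,\dots,j_k}\textbf{p}_{j_1}\mathscr{B}^{(k)}_{j_1j_2\cdots j_k}$. This is exactly the $j_2$ entry of $\sum_k\lambda_k\mathscr{B}^{(k)}\times_1\textbf{p}\times_{3,\dots,k}\{\textbf{1},\dots,\textbf{1}\}$, which equals $\textbf{p}_{j_2}$ by the second constraint.

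There is no real obstacle here. The one point needing care is bookkeeping about which mode the surviving free index of $\textbf{P}$ occupies. In $\textbf{P}=\sum_k\lambda_k\mathscr{B}^{(k)}\times_{3,\dots,k}\{\textbf{1},\dots\}$, mode 1 becomes the row index and mode 2 the column index. Hence contracting mode 1 with $\textbf{p}$ corresponds to $\textbf{P}^\top\textbf{p}$, and contracting mode 2 with $\textbf{1}$ corresponds to $\textbf{P}\textbf{1}$. Also, since the contraction with $\textbf{1}$ over modes $3,\dots,k$ commutes with the later contractions over modes 1 or 2, the order of tensor–vector products does not matter.
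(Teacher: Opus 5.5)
Your proposal is correct and follows the same approach as the paper. The paper simply says that both identities follow immediately from the definition of $\textbf{P}$ and the two constraints defining $\mathcal B_{\le K}(\textbf{p})$. Your index-level computation spells that argument out explicitly, including the empty-contraction convention for $k=2$ and the identification of mode 1 as the row index of $\textbf{P}$.
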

\begin{proof}
The results follow immediately from the definition of the projected kernel \textbf{P} and the properties of tensor vector multiplication, as well as the row-stochasticity and stationarity constraints defined in $\mathcal B_{\le  K}(\textbf{p})$.
\end{proof}

The lemma shows that the projected kernel \textbf{P} preserves the two key structural properties of the linear system: conservation of total mass (row-stochasticity) and invariance of the prescribed equilibrium $\textbf{p}$. As a result, ergodicity and mixing behavior  (how fast this convergence occurs) of broadcasting MERWs can be analyzed through the spectral properties of $\textbf{P}^\top$ using standard finite-state Markov chain theory.
In particular, once $\textbf{P}$ is formed, convergence to $\textbf{p}$, mixing rates, and spectral gaps can be studied entirely at the node level, without ever manipulating the full family of high-order tensors $\{\mathscr{B}^{(k)}\}_{k=2}^K$.

\subsection{Sinkhorn--Schr\"odinger Scaling}
Since broadcasting MERWs on general non-uniform directed hypergraphs can be formulated as a KL projection onto two linear constraint sets, the resulting inference problem has a simple and structured solution. In particular, the constraints enforce global mass conservation of the event mixture and stationarity of the induced node-level dynamics, while the KL objective selects the closest broadcasting model to the reference hypergraph. This variational structure leads to a multiplicative scaling form for the optimal broadcasting transition tensors $\mathscr{B}^{(k)}$, which can be computed efficiently via a Sinkhorn--Schr\"odinger type iteration.

For strictly positive scaling vectors $\textbf{u},\textbf{v}\in\mathbb R^n_{++}$, define the scaled broadcasting transition tensors
\begin{equation*}
\mathscr{B}^{(k)}(\textbf{u},\textbf{v})
=
\mathscr{K}^{(k)} \odot\big(\textbf{u}\circ (\textbf{v}\circ\textbf{v}\circ\stackrel{k-1}{\cdots}\circ \textbf{v})\big),
\end{equation*}
where $\mathscr{K}^{(k)}_{j_1j_2\cdots j_k} =\textbf{p}_{j_1} \mathscr{A}^{(k)}_{j_1j_2\cdots j_k}$ denote the pivot-weighted reference tensors for the $k$-uniform layers. Here, the scaling vector \textbf{u} is associated with the pivot mode, while the vector \textbf{v} is shared across all receiver modes. Under this parametrization, the constraints in \eqref{eq:constraint2} can be expressed in terms of the induced mixture marginals, i.e., 
\begin{equation}\label{eq:marginals}
    \begin{split}
        \varphi(\textbf{u},\textbf{v})&=\mathscr{B}^{(k)}(\textbf{u},\textbf{v})\times_{2,3,\dots,k}\{\textbf{1},\textbf{1},\stackrel{k-1}{\dots},\textbf{1}\},\\
\eta(\textbf{u},\textbf{v})&=\sum_{k=2}^{K}
\lambda_k\,
\mathscr{B}^{(k)}(\textbf{u},\textbf{v})\times_{1}\textbf{p}\times_{3,4,\dots,k}\{\textbf{1},\textbf{1},\stackrel{k-2}{\dots},\textbf{1}\}.
\end{split}
\end{equation}
The feasibility conditions in \eqref{eq:constraint2} are therefore equivalent to $\varphi(\textbf{u},\textbf{v})= \textbf{1}$ and $\eta(\textbf{u},\textbf{v})=\textbf{p}$. Each marginal is affine in one scaling vector when the other is held fixed. This structure naturally leads to an alternating rescaling procedure of Sinkhorn--Schr\"odinger type:
\begin{align*}
\textbf{u} \gets \textbf{u}\odot\big(\textbf{1} \oslash \varphi(\textbf{u},\textbf{v})\big),\quad
\textbf{v} \gets \textbf{v}\odot\big(\textbf{p} \oslash \eta(\textbf{u},\textbf{v})\big),    
\end{align*}
where $\oslash$ denotes entry-wise division. 
These updates iteratively enforce the two constraints by normalizing the pivot and receiver marginals in turn. Upon convergence, the resulting tensors $\mathscr{B}^{(k)\star}=\mathscr{B}^{(k)}(\bv^\star,\bu^\star)$ for all $k=2,3,\dots,K$ constitute the solution to the KL projection problem. Algorithm~\ref{alg:sinkhorn-broadcast-nonuniform} summarizes this procedure and serves as a direct analogue of classical two-vector Sinkhorn scaling. In particular, the 
\textbf{u}-update enforces event-level mass conservation of the mixture, while the 
\textbf{v}-update enforces stationarity of the induced node-marginal dynamics.

\begin{algorithm}[t]
\caption{Sinkhorn--Schr\"odinger Scaling for Broadcasting}\label{alg:sinkhorn-broadcast-nonuniform}
\begin{algorithmic}[1]
\Require State distribution $\textbf{p}\in\mathbb R^n_{++}$ with $\textbf{1}^\top\textbf{p}=1$, adjacency tensors $\{\mathscr{A}^{(k)}\}_{k=2}^{ K}$,
layer weights $\{\lambda_k\}_{k=2}^{ K}$, tolerance $\varepsilon>0$
\State Initialize $\textbf{u}\gets \textbf{1}$, $\textbf{v}\gets \textbf{1}$
\While{ not converge }
\State Compute $\varphi(\textbf{u},\textbf{v})$ based on \eqref{eq:marginals}
\State Set $\textbf{u} \gets \textbf{u}\odot\big(\textbf{1} \oslash \varphi(\textbf{u},\textbf{v})\big)$
\State Compute $\eta(\textbf{u},\textbf{v})$ based on \eqref{eq:marginals}
\State Set $\textbf{v} \gets \textbf{v}\odot\big(\textbf{p} \oslash \eta(\textbf{u},\textbf{v})\big)$
\EndWhile
\State \textbf{Return:} $\mathscr{B}^{(k)\star}\gets \mathscr{B}^{(k)}(\textbf{u},\textbf{v})$ for $k=2,3,\dots, K$
\end{algorithmic}
\end{algorithm}

\begin{proposition}[Convergence]
Under the regularity conditions for dual coordinate ascent, Algorithm~\ref{alg:sinkhorn-broadcast-nonuniform} admits a fixed point $(\textbf{u}^{\star},\textbf{v}^{\star})$ corresponding to the unique optimal broadcasting transition tensors, and the induced transition tensors converge to the optimal solution at a linear rate.
\end{proposition}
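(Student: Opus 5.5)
The plan is to read Algorithm~\ref{alg:sinkhorn-broadcast-nonuniform} as exact block-coordinate ascent on the concave dual of Problem~\ref{prob:nonuniform-prime}, and then borrow the standard convergence theory for such schemes. Write $\textbf{u}=\exp(\boldsymbol{a})$ and $\textbf{v}=\exp(\boldsymbol{b})$. Substituting the multiplicative form \eqref{eq:close-form2} into the Lagrangian of Proposition~\ref{prop:pro2} gives, up to reparametrizing the multipliers and additive constants, a dual function of the form
\begin{equation*}
G(\boldsymbol{a},\boldsymbol{b})=\textbf{1}^\top\boldsymbol{a}+\textbf{p}^\top\boldsymbol{b}-\sum_{k=2}^K\lambda_k\sum_{j_1,\dots,j_k}\mathscr{K}^{(k)}_{j_1\cdots j_k}e^{\boldsymbol{a}_{j_1}}e^{\boldsymbol{b}_{j_2}+\cdots+\boldsymbol{b}_{j_k}}.
\end{equation*}
This $G$ is concave. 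Its partial gradients are exactly $\textbf{1}-\varphi$ and $\textbf{p}-\eta$ from \eqref{eq:marginals}. I will read $\varphi$ as the $\lambda_k$-weighted mixture, as required by \eqref{eq:constraint2}. If the receiver weight is tied to a single mode, as the stationarity constraint indicates, the precise form of the $\textbf{b}$-dependence has to be fixed carefully first.

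The first step is to check that the $\textbf{u}$-update is an exact maximization of $G$ in $\boldsymbol{a}$ for fixed $\boldsymbol{b}$. This holds because $\varphi$ is linear in $\textbf{u}$ entrywise: $\varphi_{j_1}$ is $\textbf{u}_{j_1}$ times a constant, so setting $\textbf{u}_{j_1}\gets\textbf{u}_{j_1}/\varphi_{j_1}$ solves $\partial_{\boldsymbol{a}}G=0$ exactly.

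The second step is the analogous claim for the $\textbf{v}$-update: if $\eta$ is linear in $\textbf{v}$ at the index $j_2$, this update is also an exact block maximization. Where $\textbf{v}$ enters other receiver modes too, the update is instead a multiplicative fixed-point or Bregman step. In that case I would show it still increases $G$. The tool is a Jensen-type bound on the exponential sum, the same majorization used for tensor or Iterative Proportional Fitting scaling.

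The third step uses the ``regularity conditions'' hypothesis. I take these to mean that the set $\mathcal B_{\le K}(\textbf{p})$ contains a point with full support on $\operatorname{supp}(\mathscr{K}^{(k)})$, which is a Slater condition. This makes $G$ bounded above with compact superlevel sets modulo its invariance directions. The iterates stay bounded and any limit point is a stationary point of $G$, hence a dual optimum $(\textbf{u}^\star,\textbf{v}^\star)$ and a fixed point of the algorithm. By strong duality and the uniqueness already established in Proposition~\ref{prop:pro2}, the tensors $\mathscr{B}^{(k)}(\textbf{u}^\star,\textbf{v}^\star)$ equal the unique primal optimizer. The primal iterates therefore converge even if the dual potentials are determined only up to a gauge, such as the scaling $\textbf{u}\to c\textbf{u}$ with a compensating change of $\textbf{v}$.

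The main obstacle is the linear rate. I would follow the Luo--Tseng analysis of coordinate descent for problems of the form $g(E\boldsymbol{x})+\boldsymbol{c}^\top\boldsymbol{x}$ with $g$ strictly convex, which matches $-G$ composed with the linear map $(\boldsymbol{a},\boldsymbol{b})\mapsto$ exponents. Near the optimum on the bounded iterate set, $g$ is smooth and strongly convex. This yields an error bound and hence linear convergence of $G$, which transfers to the primal tensors through Pinsker-type or KL-continuity estimates. As an alternative, one can linearize the scaling map and use a Hilbert projective metric contraction on the positive cone, as in Birkhoff's proof for classical Sinkhorn, provided the support is primitive enough. Either route relies on the regularity assumption, and I would state precisely which form of it is used.
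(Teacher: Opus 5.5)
Your skeleton (block ascent on a concave dual in the log-potentials, Luo--Tseng for the rate, primal uniqueness to pass from dual to primal) is the same as the paper's. However, the function you build it on is not the dual of Problem~\ref{prob:nonuniform-prime}, and this is a genuine gap rather than a detail to be ``fixed carefully first.''

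The $\boldsymbol b$-gradient of your $G$ is $\textbf{p}_j-\sum_{k}\lambda_k\sum_{m=2}^{k}\sum_{J:\,j_m=j}\mathscr{B}^{(k)}_{J}$. This is an unweighted count over all $k-1$ receiver slots, whereas $\eta_j$ weights each entry by $\textbf{p}_{j_1}$ and reads only mode~2. So the identity with $\textbf{p}-\eta$ is false. In the uniform case $G$ is even unbounded above along $(\boldsymbol a-(k-1)t\textbf{1},\,\boldsymbol b+t\textbf{1})$ as $t\to-\infty$, so your third step cannot start.

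The defect is inherited from the factorization \eqref{eq:close-form2} you substituted. The first-order condition in Proposition~\ref{prop:pro2} contains the cross term $\lambda_k\textbf{p}_{j_1}\boldsymbol{\beta}_{j_2}$, which cannot be absorbed into a factor depending on $j_2$ alone. Redoing the Lagrangian on the receiver-symmetric cone (symmetrizing the mode-2 constraint gradient) gives $\log(\mathscr{B}^{(k)}_J/\mathscr{K}^{(k)}_J)=\boldsymbol a_{j_1}+\frac{\textbf{p}_{j_1}}{k-1}\sum_{m=2}^{k}\boldsymbol b_{j_m}$. This is a pivot-dependent exponent on $\textbf{v}$, just as the paper's merging factorization carries $\omega$. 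The resulting dual is
\[
\widetilde G(\boldsymbol a,\boldsymbol b)=\textbf{1}^\top\boldsymbol a+\textbf{p}^\top\boldsymbol b-\sum_{k=2}^{K}\lambda_k\sum_{J=(j_1,\dots,j_k)}\mathscr{K}^{(k)}_{J}\exp\Big(\boldsymbol a_{j_1}+\frac{\textbf{p}_{j_1}}{k-1}\sum_{m=2}^{k}\boldsymbol b_{j_m}\Big).
\]
Its gradients really are $\textbf{1}-\varphi$ (the $\lambda$-mixture, as you rightly read it) and $\textbf{p}-\eta$, and the analogous direction $(\boldsymbol a-t\textbf{p},\boldsymbol b+t\textbf{1})$ is a genuine invariance. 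Consequently, with the literal parametrization $\mathscr{K}^{(k)}\odot(\textbf{u}\circ\textbf{v}\circ\cdots\circ\textbf{v})$ of Algorithm~\ref{alg:sinkhorn-broadcast-nonuniform}, a fixed point is feasible but, for non-uniform $\textbf{p}$, generally not the KL optimizer; a $2\times2$ example with $k=2$ already shows this. The scaled tensors must be taken as $\mathscr{K}^{(k)}_J\textbf{u}_{j_1}\prod_{m=2}^{k}\textbf{v}_{j_m}^{\textbf{p}_{j_1}/(k-1)}$ before the statement can be proved by your route or the paper's.

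Your second step is where you are right and the paper's proof, which asserts that both updates are exact block maximizations, is not. $\eta$ is not affine in $\textbf{v}$: for $k\ge3$ because $\textbf{v}$ enters several receiver modes, and for every $k$ once the exponent is corrected. So Luo--Tseng's exact coordinate-minimization theorem does not apply as cited. Your Jensen idea repairs this, but only for $\widetilde G$. There the receiver weights $\frac{\textbf{p}_{j_1}}{k-1}\#\{m:j_m=j\}$ sum to $\textbf{p}_{j_1}\le1$, so convexity of $\exp$ gives
\[
\widetilde G(\boldsymbol a,\boldsymbol b+\boldsymbol\delta)-\widetilde G(\boldsymbol a,\boldsymbol b)\ge\sum_{j=1}^{n}\big[\textbf{p}_j\boldsymbol\delta_j-\eta_j(e^{\boldsymbol\delta_j}-1)\big].
\]
The right-hand side is maximized at $\boldsymbol\delta=\log(\textbf{p}\oslash\eta)$, which is exactly $\textbf{v}\gets\textbf{v}\odot(\textbf{p}\oslash\eta)$, with increase at least $\sum_j[\textbf{p}_j\log(\textbf{p}_j/\eta_j)-\textbf{p}_j+\eta_j]$. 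For your $G$ the weights sum to $k-1$ and carry no $\textbf{p}_{j_1}$, so the same computation neither produces $\eta$ nor licenses the unit step.

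Finally, monotone ascent is not yet a linear rate, and your sketch asserts the rate without the two inequalities that give it:
\begin{itemize}
\item After the exact $\textbf{u}$-step one has $\textbf{1}^\top\eta=\textbf{p}^\top\varphi=1$, so by Pinsker the $\textbf{v}$-step gains at least $\frac12\|\textbf{p}-\eta\|_1^2=\frac12\|\nabla\widetilde G\|_1^2$.
\item The Luo--Tseng error bound $\operatorname{dist}(\boldsymbol x,X^\star)\le\tau\|\nabla\widetilde G(\boldsymbol x)\|$ holds on level sets, since $-\widetilde G$ has the form $h(E\boldsymbol x)+\boldsymbol c^\top\boldsymbol x$ with $h$ strictly convex (under your Slater reading). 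Together with concavity, $\widetilde G^\star-\widetilde G(\boldsymbol x)\le\|\nabla\widetilde G(\boldsymbol x)\|\operatorname{dist}(\boldsymbol x,X^\star)$.
\end{itemize}
Combining these gives a per-cycle contraction of the dual gap.
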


\begin{proof}
The constraints in \eqref{eq:constraint2} prescribe the
pivot and receiver marginals.
Algorithm~\ref{alg:sinkhorn-broadcast-nonuniform} alternates the two
multiplicative updates
\begin{equation*}\label{eq:app-v-update}
\textbf{u}^{+}
=
\textbf{u}\odot
\big(\textbf{1}\oslash\varphi(\textbf{u},\textbf{v})\big),
\end{equation*}
which enforces
$\varphi(\textbf{u}^{+},\textbf{v})=\textbf{1}$, and
\begin{equation*}\label{eq:app-u-update}
\textbf{v}^{+}
=
\textbf{v}\odot
\big(\textbf{p}\oslash\eta(\textbf{u},\textbf{v})\big).
\end{equation*}
Each update corresponds to an
exact block-wise maximization of the dual in the log-potentials.
Strict convexity of the primal KL objective on the prescribed support
guarantees a unique optimizer
$\{\mathscr{B}^{(k)\star}\}_{k=2}^{K}$.
Moreover, the dual objective is smooth and
concave in the log-potentials, and the alternating updates
implement block coordinate ascent. Therefore, by the linear
convergence result in \cite{luo1992convergence}, the dual iterates approach the optimal dual solution set
at a linear rate. Since the transition tensors depend smoothly on the logarithmic dual variables, and every optimal dual solution induces the unique primal optimizer, the resulting transition tensors converge linearly to
$\{\mathscr{B}^{(k)\star}\}_{k=2}^{K}$.
\end{proof}

In summary, the Sinkhorn--Schr\"odinger scaling provides an efficient and theoretically sound method to compute the optimal broadcasting transition tensors for non-uniform directed hypergraphs. By parametrizing the tensors with strictly positive scaling vectors and enforcing the pivot and receiver marginals via alternating multiplicative updates, the algorithm guarantees convergence to a unique solution. This approach not only preserves the structural constraints of the KL projection but also enables linear convergence analysis through the dual block coordinate ascent framework, making it practical for large-scale hypergraph inference.

\section{MERWs for Higher-Order Merging}\label{sec:merge}
We next consider the many-to-one merging setting, in which each directed hyperedge represents a multiway interaction that resolves a group of nodes into a single outcome, which can be seen as the reverse mechanism of the broadcasting case. This abstraction naturally models a wide range of processes, including consensus formation in social or biological systems, data or signal fusion in sensor networks, and decision or action selection in collective control and robotics \cite{olfati2007consensus,olfati2006flocking,sahasrabuddhe2021modelling}. In such scenarios, the hyperedge captures the joint effect of multiple participants acting simultaneously, while the directionality encodes the resolution of this interaction into a unique result. In contrast to broadcasting, where projection yields a linear Markov recursion on node marginals, merging produces an intrinsically higher-order update: the next-node distribution depends on the current distribution through a polynomial map. Our objective is therefore to preserve this hyperedge-level semantics explicitly and to infer a consistent merging transition law via a maximum-entropy principle.

\begin{figure}
    \centering
    \includegraphics[width=.6\linewidth]{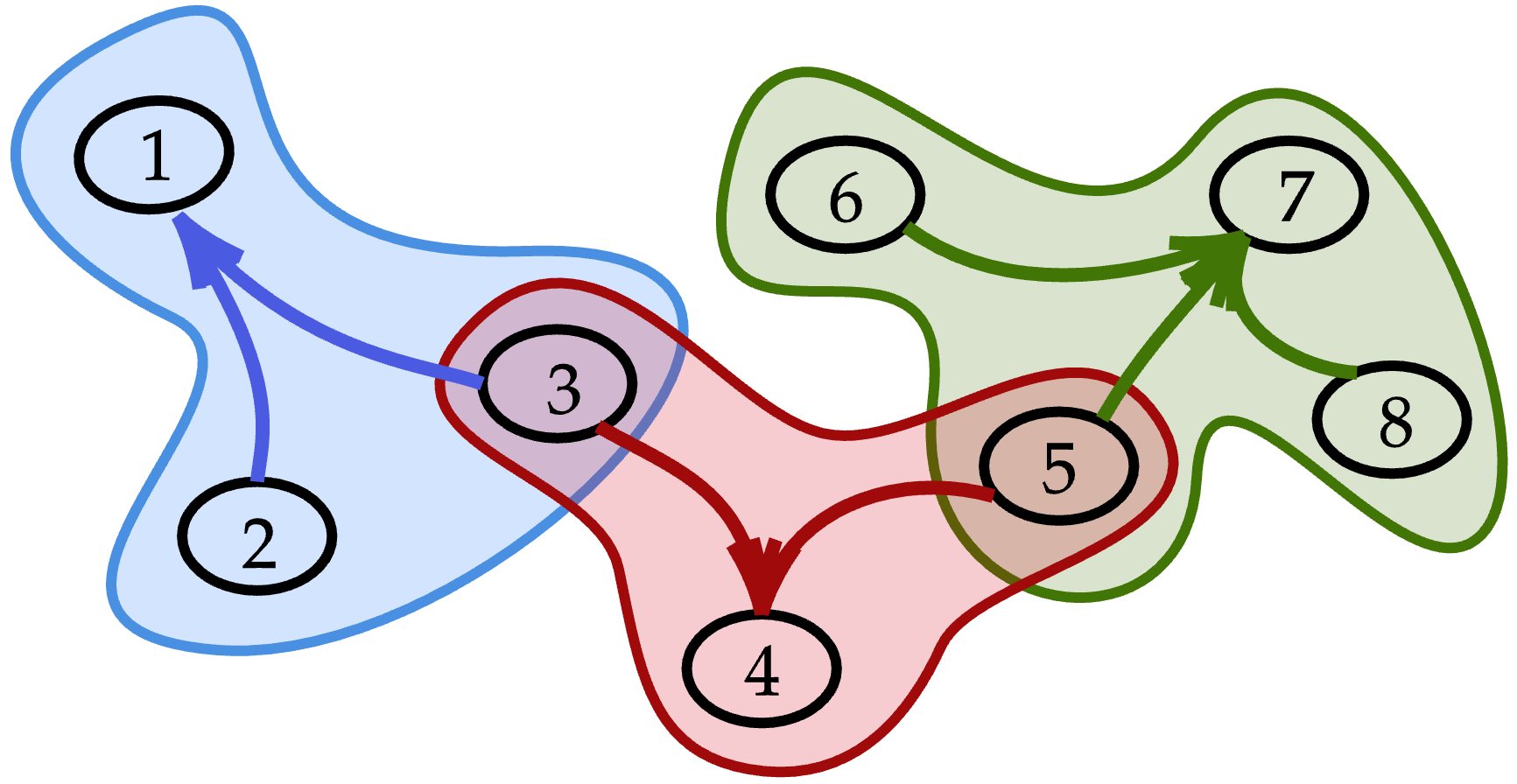}
    \caption{Merging MERWs on directed hypergraphs. Each colored region represents a directed hyperedge, with arrows indicating activation from a set of pivot (tail) nodes to a receiver (head) node (e.g., $\{v_2,v_3\}\rightarrow v_1$).}
    \label{fig:2}
\end{figure}

\subsection{Merging MERWs on Uniform Hypergraphs}

Consider a $k$-uniform directed hypergraph with $n$ nodes, in which each hyperedge has $k-1$ tail nodes and one head node. The merging transition tensor is a nonnegative, $k$th-order, $n$-dimensional back-symmetric tensor  $\mathscr{M}\in\mathbb R_+^{n\times n\times\stackrel{k}{\cdots} \times n}$ such that for the set of pivot (tail) nodes $\{v_{j_1},v_{j_2},\dots,v_{j_{k-1}}\}$, the entries $\mathscr{M}_{j_1 j_2 \cdots j_k}$ specify the probability that the interaction resolves to the receiver (head) node $v_{j_k}$.

\begin{definition}[Merging]
Let $\mathscr{M}\in\mathbb R_+^{n\times n\times\stackrel{k}{\cdots} \times n}$ be a merging transition tensor and  $\textbf{p}_t \in \mathbb{R}_+^n$ be the state distribution  at time $t$. The merging dynamics, which induce the node-level recursion, is defined as
\begin{equation}\label{eq:uniform-merge-dynamics}
\textbf{p}_{t+1} = \mathscr{M} \times_{1,2,\dots,k-1} \{\textbf{p}_t,\textbf{p}_t,\stackrel{k-1}{\dots} \textbf{p}_t\},
\end{equation}
where the tensor vector contractions aggregate over all $(k-1)$-tuples of pivot nodes weighted by their joint occurrence under $\textbf{p}_t$, and map each such multiway interaction to its corresponding receiver.
\end{definition}

In contrast to broadcasting dynamics, which reduce to linear recursions on node marginals, merging dynamics inherently preserves the higher-order structure through this homogeneous polynomial map. The merging MERW inference problem parallels the broadcasting case at the hyperedge level, but with higher-order interactions. We impose two key requirements. First, each set of pivot nodes must induce a valid probability distribution over the receiver node, enforced via the output-stochastic constraint defined below. Second, the induced node-level recursion must admit a prescribed stationary distribution, ensuring consistency between the hyperedge-level transition law and the desired equilibrium behavior. Specifically, let $\mathcal{Y}_+$ denote the cone of nonnegative, $k$th-order, $n$-dimensional back-symmetric tensors. Fix $\textbf{p} \in \mathbb{R}^n_{++}$ with $\textbf{1}^\top \textbf{p} = 1$. We then define the admissible class of merging kernels as
\begin{equation}\label{eq:merge-constraint}
\begin{split}
\mathcal M_k(\bp)
=
\Big\{\mathscr{M} \in \mathcal Y_+ \ \Big|\
\mathscr{M}\times_k \textbf{1}=\mathscr{I},\mathscr{M}\times_{1,2,\dots,k-1}\{\textbf{p},\textbf{p},\stackrel{k-1}{\dots},\textbf{p}\}=\textbf{p}
\Big\},
\end{split}
\end{equation}
where $\mathscr{I}\in\mathbb{R}^{n\times n\times\stackrel{k-1}{\cdots} \times n}$ denotes the all-one tensor.

We now formulate maximum-entropy inference for the merging kernel. Given a hypergraph support tensor $\mathscr{A}$ (e.g., the degree-normalized adjacency tensor) and a target stationary distribution $\textbf{p}$, we select the merging transition tensor $\mathscr{M}$ that minimizes the relative entropy with respect to $\mathscr{A}$ among all kernels satisfying the constraints in \eqref{eq:merge-constraint}.

\begin{problem} \label{prob:3}
Given a $k$-uniform directed hypergraph with a single head per hyperedge and $n$ nodes, fix a strictly positive distribution $\textbf{p} \in \mathbb{R}^n_{++}$ with $\textbf{1}^\top \textbf{p} = 1$, and let $\mathscr{A} \in \mathbb{R}_+^{n \times n \times \stackrel{k}{\cdots} \times n}$ denote the degree-normalized adjacency tensor. The corresponding merging transition tensor $\mathscr{M} \in \mathbb{R}_+^{n \times n \times \stackrel{k}{\cdots} \times n}$ is obtained by solving
\begin{align}\label{eq:obj3}
\min_{\mathscr{M}\in \mathcal M(\textbf{p})}\ 
\sum_{j_1=1}^n\sum_{j_2=1}^n\cdots\sum_{j_k=1}^n \mathscr{M}_{j_1j_2\cdots j_k}
\log\!\Big(\frac{\mathscr{M}_{j_1j_2\cdots j_k}}{\mathscr{A}_{j_1 j_2\cdots j_k}}\Big),
\end{align}
subject to the support constraint $\support(\mathscr{M}) \subseteq \support(\mathscr{A})$.
\end{problem}

Similar to the broadcasting setting, we solve \eqref{eq:obj3} directly over the merging transition tensor $\mathscr{M}$. To facilitate the KL projection, we introduce a pivot-weighted reference tensor supported on the hyperedges defined as
\begin{equation*}
\mathscr{K}_{j_1j_2\cdots j_k}=\prod_{p=1}^{k-1}\textbf{p}_{j_p}\mathscr{A}_{j_1j_2\cdots j_k},
\quad
\support(\mathscr{K})=\support(\mathscr{A}),
\end{equation*}
preserves the feasible set and does not alter the optimizer of \eqref{eq:obj3}, but allows the merging MERW to be expressed as a KL projection relative to $\mathscr{K}$.

\begin{proposition}\label{prop:4}
Assume that the constraint set $\mathcal{M}(\textbf{p})$ has nonempty relative interior with respect to the prescribed support.
Problem~3 admits a unique optimal merging transition tensor
$\mathscr{M}^{\star}\in\mathcal{M}(\textbf{p})$.
Moreover, there exist a strictly positive symmetric tensor
$\mathscr{U}\in
\mathbb{R}^{n\times n\times\stackrel{k-1}{\cdots}\times n}_{++}$
and a vector $\textbf{v}\in\mathbb{R}^{n}_{++}$ such that
$\mathscr{M}^{\star}$ admits the multiplicative factorization

\begin{equation}
\mathscr{M}^{\star}_{j_1j_2\cdots j_k}
=
\mathscr{K}_{j_1j_2\cdots j_k}
\mathscr{U}_{j_1j_2\cdots j_{k-1}}
\textbf{v}_{j_k}^{\omega_{j_1j_2\cdots j_{k-1}}},
\label{eq:merging_factorization}
\end{equation}
where $\omega_{j_1j_2\cdots j_{k-1}}=\prod_{p=1}^{k-1}\textbf{p}_{j_p}$ denotes the  pivot weight.
\black
\end{proposition}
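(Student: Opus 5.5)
We follow the template of Proposition~\ref{prop:1}, adapted to the merging constraints in \eqref{eq:merge-constraint}. The argument has three steps: existence and uniqueness, replacement of the reference tensor, and a Lagrangian stationarity computation.

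\emph{Existence and uniqueness.} The feasible set $\mathcal M(\textbf{p})$, intersected with the support restriction $\support(\mathscr{M})\subseteq\support(\mathscr{A})$, is cut out by linear equalities and nonnegativity, so it is convex and closed. The output-stochastic constraint $\mathscr{M}\times_k\textbf{1}=\mathscr{I}$ bounds every entry by one, so the set is compact. The objective is lower semicontinuous and finite on this set, hence a minimizer exists. Since $x\mapsto x\log(x/c)$ is strictly convex for $c>0$, the objective is strictly convex on the support, and the minimizer is unique.

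\emph{Change of reference.} Next we replace $\mathscr{A}$ by $\mathscr{K}$. Because $\log(\mathscr{K}/\mathscr{A})=\sum_{p=1}^{k-1}\log\textbf{p}_{j_p}$ does not depend on $j_k$, the output-stochastic constraint gives
\[
\sum\mathscr{M}\log(\mathscr{M}/\mathscr{A})=\sum\mathscr{M}\log(\mathscr{M}/\mathscr{K})+\sum_{j_1,\dots,j_{k-1}}\log\omega_{j_1\cdots j_{k-1}}.
\]
The last term is constant on the feasible set, so the optimizer is unchanged.

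\emph{Lagrangian and factorization.} We introduce a multiplier tensor $\boldsymbol{\alpha}$ of order $k-1$ for the constraint $\mathscr{M}\times_k\textbf{1}=\mathscr{I}$, and a vector $\boldsymbol{\beta}\in\mathbb{R}^n$ for the stationarity constraint $\mathscr{M}\times_{1,\dots,k-1}\{\textbf{p},\dots,\textbf{p}\}=\textbf{p}$. Stationarity contributes
\[
\sum_{j_k}\boldsymbol{\beta}_{j_k}\Big(\sum_{j_1,\dots,j_{k-1}}\omega_{j_1\cdots j_{k-1}}\mathscr{M}_{j_1\cdots j_k}-\textbf{p}_{j_k}\Big).
\]
On the support, the first-order condition reads
\[
\log(\mathscr{M}^\star/\mathscr{K})+1+\boldsymbol{\alpha}_{j_1\cdots j_{k-1}}+\omega_{j_1\cdots j_{k-1}}\boldsymbol{\beta}_{j_k}=0.
\]
Exponentiating gives the stated form with $\mathscr{U}=\exp(-1-\boldsymbol{\alpha})$ and $\textbf{v}=\exp(-\boldsymbol{\beta})$. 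The weight $\omega$ enters as an exponent of $\textbf{v}_{j_k}$, rather than as a separate factor, because the stationarity multiplier is indexed only by the receiver. Symmetry of $\mathscr{U}$ follows from uniqueness: both $\mathscr{K}$ and the constraints are invariant under permuting the first $k-1$ indices, so the permuted optimizer is again optimal and must coincide with $\mathscr{M}^\star$. One can therefore symmetrize $\boldsymbol{\alpha}$, or equivalently solve for $\mathscr{U}$ from the stochastic constraint, which yields a symmetric tensor.

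\emph{Main obstacle.} The delicate step is justifying the KKT conditions with multipliers, in particular ruling out zero entries of $\mathscr{M}^\star$ inside $\support(\mathscr{A})$. Without this, $\mathscr{U}$ and $\textbf{v}$ need not be finite and strictly positive. This is where the relative-interior assumption enters. We first invoke Slater's condition for this convex program with affine constraints plus nonnegativity, which gives strong duality and attainment of the dual multipliers. Then, to show strict positivity of $\mathscr{M}^\star$ on the support, we take a relatively interior feasible point $\mathscr{M}^\circ$. Along the segment $\mathscr{M}^\star+\epsilon(\mathscr{M}^\circ-\mathscr{M}^\star)$, any zero entry of $\mathscr{M}^\star$ contributes a derivative term $\log\epsilon\to-\infty$ to the one-sided directional derivative, which contradicts optimality of $\mathscr{M}^\star$. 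Once positivity on the support is established, the first-order conditions hold with equality on the support, and all resulting factors are finite and positive.
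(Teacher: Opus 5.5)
Your proposal is correct and follows essentially the same route as the paper's proof: existence by compactness and uniqueness by strict convexity, reweighting the reference to $\mathscr{K}$, a Lagrangian with an order-$(k-1)$ multiplier tensor for output-stochasticity and a vector multiplier $\boldsymbol{\beta}$ for stationarity, and exponentiating the first-order condition to obtain $\mathscr{U}=\exp(-1-\boldsymbol{\alpha})$ and $\textbf{v}=\exp(-\boldsymbol{\beta})$. The paper only asserts strict positivity of $\mathscr{M}^\star$ on the support and obtains symmetry of $\mathscr{U}$ from the back-symmetry of $\mathscr{M}^\star$ and $\mathscr{K}$; your segment argument for positivity and your uniqueness-based argument for symmetry fill in these details without changing the approach.
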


\begin{proof}
Existence of $\mathscr{M}^{\star}$ follows from the compactness of the
feasible set and the continuity of the KL objective, while uniqueness
follows from the strict convexity of the KL objective over the
prescribed support, by the same argument as in Proposition~1.
Moreover, since $\mathcal{M}(\textbf{p})$ has nonempty relative
interior with respect to the prescribed support, the optimal tensor
$\mathscr{M}^{\star}$ is strictly positive on
$\operatorname{Supp}(\mathscr{K})$.

To derive the multiplicative form, we rewrite~(12) relative to the
pivot-weighted reference tensor $\mathscr{K}$ and introduce Lagrange
multipliers
$\mathscr{R}\in
\mathbb{R}^{n\times n\times\stackrel{k-1}{\cdots}\times n}$
and $\boldsymbol{\beta}\in\mathbb{R}^{n}$ corresponding to the two
constraints in~(11).
The Lagrangian is
\begin{align*}
\mathcal L(&\mathscr{M},\mathscr{R},\boldsymbol{\beta})
=
\sum_{j_1=1}^n\sum_{j_2=1}^n\cdots\sum_{j_k=1}^n \mathscr{M}_{j_1j_2\cdots j_k}
\log\Big(\frac{\mathscr{M}_{j_1j_2\cdots j_k}}{\mathscr{K}_{j_1j_2\cdots j_k}}\Big)\\
&+\sum_{j_1=1}^n\sum_{j_2=1}^n\cdots\sum_{j_{k-1}=1}^n
\mathscr{R}_{j_1j_2\cdots j_{k-1}}
\Big(\sum_{j_k=1}^n\mathscr{M}_{j_1j_2\cdots j_k}-1\Big)\\
&
+\sum_{j_k=1}^n\boldsymbol{\beta}_{j_k}
\Big(\sum_{j_1=1}^n\sum_{j_2=1}^n\cdots\sum_{j_{k-1}=1}^n\prod_{p=1}^{k-1}\textbf{p}_{j_p}\mathscr{M}_{j_1j_2\cdots j_k}-\textbf{p}_{j_k}\Big).
\end{align*}
At the optimum, for any index
$(j_1,j_2,\ldots,j_k)\in\operatorname{Supp}(\mathscr{K})$,
the first-order condition gives
\begin{equation*} 
\displaystyle
\frac{\partial \mathcal L}{\partial \mathscr{M}_{j_1j_2\cdots j_k}}
=
\log\Big(\frac{\mathscr{M}^\star_{j_1j_2\cdots j_k}}{\mathscr{K}_{j_1j_2\cdots j_k}}\Big)+1
+\mathscr{R}_{j_1j_2\cdots j_{k-1}}+\prod_{p=1}^{k-1}\textbf{p}_{{j_p}}\boldsymbol{\beta}_{j_k}
=0.
\end{equation*}

Using 
$\omega_{j_1j_2\cdots j_{k-1}}=\prod_{p=1}^{k-1}\textbf{p}_{j_p}$,
this condition becomes
\begin{equation*}
\log\left(
\frac{\mathscr{M}^{\star}_{j_1j_2\cdots j_k}}
{\mathscr{K}_{j_1j_2\cdots j_k}}
\right)
=
-1
-\mathscr{R}_{j_1j_2\cdots j_{k-1}}
-\omega_{j_1j_2\cdots j_{k-1}}
\boldsymbol{\beta}_{j_k}.
\end{equation*}
Exponentiating both sides yields
\begin{align*}
\mathscr{M}^{\star}_{j_1j_2\cdots j_k}
=
\mathscr{K}_{j_1j_2\cdots j_k}
\exp\left(
-1-\mathscr{R}_{j_1j_2\cdots j_{k-1}}
\right) \times 
\exp\left(
-\omega_{j_1j_2\cdots j_{k-1}}
\boldsymbol{\beta}_{j_k}
\right).
\end{align*}
Define the positive scaling tensor
$\mathscr{U}\in
\mathbb{R}^{n\times n\times\stackrel{k-1}{\cdots}\times n}_{++}$ with  
$\mathscr{U}_{j_1j_2\cdots j_{k-1}}
=
\exp(
-1-\mathscr{R}_{j_1j_2\cdots j_{k-1}}
)$
and vector $\textbf{v}\in\mathbb{R}^{n}_{++}$ with 
$\textbf{v}_{j}
=
\exp(
-\boldsymbol{\beta}_{j}
)$. Since
$
\exp(
-\omega_{j_1j_2\cdots j_{k-1}}
\boldsymbol{\beta}_{j_k}
)
=
\textbf{v}_{j_k}^{\omega_{j_1j_2\cdots j_{k-1}}},
$
we obtain
\begin{equation*}
\mathscr{M}^{\star}_{j_1j_2\cdots j_k}
=
\mathscr{K}_{j_1j_2\cdots j_k}
\mathscr{U}_{j_1j_2\cdots j_{k-1}}
\textbf{v}_{j_k}^{\omega_{j_1j_2\cdots j_{k-1}}}.
\end{equation*}
Finally, since $\mathscr{M}^{\star}$ and $\mathscr{K}$ are back-symmetric and
$\omega_{j_1j_2\cdots j_{k-1}}$ is invariant under permutations of the
indices, the scaling tensor $\mathscr{U}$ is symmetric, which completes the proof. 
\end{proof}

The result guarantees that the optimal maximum entropy merging transition tensor $\mathscr{M}^\star$ is unique and preserves the hypergraph topology encoded in $\mathscr{K}$, adjusting only the transition weights to enforce the output-stochasticity constraint and the prescribed stationary distribution induced by the homogeneous polynomial recursion. Unlike broadcasting dynamics, which reduce to a linear recursion on node marginals, merging dynamics inherently retain higher-order interactions at the node level. This distinction is reflected in the multiplicative, back-symmetric factorization of $\mathscr{M}^\star$, which decouples input group scaling from output scaling and mirrors the nonlinear structure of the polynomial map. At the same time, this structured multiplicative form naturally supports efficient iterative schemes based on alternating rescaling, making the computation of $\mathscr{M}^\star$ both principled and tractable.

\subsection{Extension to Non-Uniform Hypergraphs}
Extending merging MERWs to non-uniform hypergraphs follows a parallel approach to the broadcasting case, generalizing the transition tensor and constraints to accommodate hyperedges of varying sizes while preserving output-stochasticity and the prescribed stationary distribution. Specifically, for each $k \in \{2,3,\dots,K\}$, let $\mathscr{A}^{(k)} \in \mathbb{R}_+^{n \times n\times \stackrel{k}{\cdots} \times n}$ denote the degree-normalized adjacency tensors of $k$-uniform hyperedges, and let $\mathscr{M}^{(k)} \in \mathbb{R}_+^{n \times n\times \stackrel{k}{\cdots} \times n}$ be the corresponding merging transition tensors. To capture the relative influence of different hyperedge sizes on the overall dynamics, we introduce nonnegative layer weights $\{\lambda_k\}_{k=2}^{K}$ with summation equal to one, which modulate each layer’s contribution to the effective node-level recursion. Layers with zero weight are omitted from the joint optimization, and a pure-layer endpoint is treated as the corresponding single-layer problem.

Let $\textbf{p}_t\in\mathbb R_+^n$ denote the state distribution at time $t$. The induced node-level merging dynamics on a non-uniform hypergraph is given by the weighted superposition
\begin{align}\label{eq:nonuniform-merge-dynamics}
\textbf{p}_{t+1}
=
\sum^{K}_{ k=2}
\lambda_k\,
\mathscr{M}^{(k)}\times_{1,2,\dots,k-1}\{\textbf{p}_t,\textbf{p}_t,\stackrel{k-1}{\dots},\textbf{p}_t\}.
\end{align}
To define the admissible class of non-uniform merging kernels, let $\mathcal Y^{(k)}_+$ denote the cone of nonnegative $k$th-order back-symmetric tensors, and
fix $\textbf{p}\in\mathbb R^n_{++}$ with $\textbf{1}^\top\textbf{p}=1$. The admissible set is then defined as
\begin{equation}\label{eq:constraint2-merge}
\begin{split}
\mathcal M_{\le K}(\bp)
=
\Big\{\{
\mathscr{M}^{(k)}\}_{k=2}^K \ \mid \ \mathscr{M}^{(k)}\in\mathcal Y^{(k)}_+,\
\mathscr{M}^{(k)}\times_k\textbf{1}
=
\mathscr{I},
\sum_{k=2}^{K}
\lambda_k
\mathscr{M}^{(k)}\times_{1,2,\dots,k-1}\{\textbf{p},\textbf{p},\stackrel{k-1}{\dots},\textbf{p}\}
=
\textbf{p}
\Big\}.
\end{split}
\end{equation}
\begin{problem} \label{prob:nonuniform-merge}
Given a directed hypergraph with a single head per hyperedge, $n$ nodes, and maximum hyperedge size $K$, fix a strictly positive distribution $\textbf{p} \in \mathbb{R}^n_{++}$ with $\textbf{1}^\top \textbf{p} = 1$, and let $\mathscr{A}^{(k)} \in \mathbb{R}_+^{n \times n \times \stackrel{k}{\cdots} \times n}$ denote the degree-normalized adjacency tensors for $k$-uniform layers. The corresponding merging transition tensors $\mathscr{M}^{(k)} \in \mathbb{R}_+^{n \times n \times \stackrel{k}{\cdots} \times n}$ are obtained by solving
\begin{equation}\label{eq:obj2-merge}
\displaystyle\min_{{\{\mathscr{M}^{(k)}\}_{k=2}^{K}}\in \mathcal M_{\leq K}(\textbf{p})}\sum_{k={ 2}}^K\lambda_k\ 
\sum_{j_1=1}^n\sum_{j_2=1}^n\cdots\sum_{j_k=1}^n \mathscr{M}^{(k)}_{j_1j_2\cdots j_k}
\log\!\Big(\frac{\mathscr{M}^{(k)}_{j_1j_2\cdots j_k}}{\mathscr{A}^{(k)}_{j_1 j_2\cdots j_k}}\Big),
\end{equation}
subject to the support constraint $\support(\mathscr{M}^{(k)}) \subseteq \support(\mathscr{A}^{(k)})$ for $k=2,3,\dots,K$.
\end{problem}

\begin{proposition}\label{prop:5}
Assume that the constraint set $\mathcal M_{\le K}(\textbf{p})$ has nonempty relative interior with respect to the prescribed support. Problem~\ref{prob:nonuniform-merge} admits a unique set of optimal merging transition tensors $\{\mathscr{M}^{(k)\star}\}_{k=2}^K \in \mathcal M_{\le K}(\textbf{p})$. Moreover, there exist strictly positive symmetric tensors $\mathscr{U}^{(k)}\in\mathbb{R}^{n\times n\times\stackrel{k-1}{\cdots} \times n}_{++}$ and a strictly positive vector $\textbf{v}\in\mathbb{R}^n_{++}$ such that $\mathscr{M}^{(k)\star}$ admits the multiplicative factorization

\begin{equation}\label{eq:close-form2-merge}
\mathscr{M}^{(k)\star}_{j_1j_2\cdots j_k}
=
\mathscr{K}^{(k)}_{j_1j_2\cdots j_k}
\mathscr{U}^{(k)}_{j_1j_2\cdots j_{k-1}}
\textbf{v}_{j_k}^{\omega^{(k)}_{j_1j_2\cdots j_{k-1}}},
\end{equation}
\black
where 
$\omega^{(k)}_{j_1j_2\cdots j_{k-1}}
=
\prod_{p=1}^{k-1}\textbf{p}_{j_p}$ denotes the pivot weight,
\black
and $\mathscr{K}^{(k)}$ denote the pivot-weighted reference tensors for $k$-uniform layers, for $k=2,3,\dots,K$.
\end{proposition}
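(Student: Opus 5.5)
The plan is to follow the template of Proposition~\ref{prop:4}, treating the whole family $\{\mathscr{M}^{(k)}\}_{k=2}^K$ as a single vector variable. Layers with $\lambda_k=0$ are discarded, as stipulated before Problem~\ref{prob:nonuniform-merge}, so every remaining $\lambda_k>0$. The argument then has three steps: existence, uniqueness, and the first-order conditions.

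\textbf{Existence.} The set $\mathcal M_{\le K}(\textbf{p})$ is cut out by linear equalities, nonnegativity, back-symmetry and the support restrictions, so it is convex and closed. Output-stochasticity $\mathscr{M}^{(k)}\times_k\textbf{1}=\mathscr{I}$ bounds every entry by one, so the set is compact. The objective is a positive combination of lower semicontinuous functions that are finite on the support, so a minimizer exists.

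\textbf{Uniqueness.} Each term $x\log(x/c)$ with $c>0$ is strictly convex, and $\lambda_k>0$. Hence the weighted sum is strictly convex on the product of the supports, and the minimizer is unique.

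\textbf{Reweighting.} Next I replace $\mathscr{A}^{(k)}$ by $\mathscr{K}^{(k)}=\omega^{(k)}\mathscr{A}^{(k)}$. The term $\sum\mathscr{M}^{(k)}\log\omega^{(k)}_{j_1\cdots j_{k-1}}$ becomes the constant $\sum_{j_1,\dots,j_{k-1}}\log\omega^{(k)}_{j_1\cdots j_{k-1}}$ after summing over $j_k$ and using $\mathscr{M}^{(k)}\times_k\textbf{1}=\mathscr{I}$. So the optimizer is unchanged, exactly as in the uniform case.

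\textbf{Positivity of the optimizer.} Using the relative-interior assumption, I argue that $\mathscr{M}^{(k)\star}$ is strictly positive on $\support(\mathscr{K}^{(k)})$. The derivative of $x\log x$ is $-\infty$ at $0$, so moving slightly from a boundary point toward an interior feasible point strictly decreases the objective. This interior optimality is what licenses dropping the nonnegativity multipliers, and I expect it to be the step needing the most care. For the same reason, Slater's condition gives existence of Lagrange multipliers for the affine constraints, with no duality gap.

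\textbf{Lagrangian.} I introduce one multiplier tensor $\mathscr{R}^{(k)}$ per layer for the output-stochasticity constraints, since these are separate for each $k$. I introduce a single vector $\boldsymbol{\beta}$ for the shared stationarity constraint, because that constraint couples all layers through $\sum_k\lambda_k(\cdot)$. Stationarity at an entry of $\support(\mathscr{K}^{(k)})$ reads
\[
\lambda_k\Big(\log\frac{\mathscr{M}^{(k)\star}_{j_1\cdots j_k}}{\mathscr{K}^{(k)}_{j_1\cdots j_k}}+1\Big)+\mathscr{R}^{(k)}_{j_1\cdots j_{k-1}}+\lambda_k\,\omega^{(k)}_{j_1\cdots j_{k-1}}\boldsymbol{\beta}_{j_k}=0.
\]

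\textbf{Factorization.} Dividing by $\lambda_k>0$ and exponentiating gives the claimed form with
\[
\mathscr{U}^{(k)}=\exp\!\big(-1-\mathscr{R}^{(k)}/\lambda_k\big),\qquad \textbf{v}=\exp(-\boldsymbol{\beta}).
\]
The vector $\textbf{v}$ is common to all $k$ precisely because $\boldsymbol{\beta}$ is shared across layers, which is the essential difference from the layer-wise tensors $\mathscr{U}^{(k)}$.

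\textbf{Symmetry of $\mathscr{U}^{(k)}$.} On the support, $\mathscr{U}^{(k)}_{j_1\cdots j_{k-1}}=\mathscr{M}^{(k)\star}_{j_1\cdots j_k}\big/\big(\mathscr{K}^{(k)}_{j_1\cdots j_k}\textbf{v}_{j_k}^{\omega^{(k)}}\big)$. Each ingredient is invariant under permutations of $(j_1,\dots,j_{k-1})$: $\mathscr{M}^{(k)\star}$ and $\mathscr{K}^{(k)}$ are back-symmetric, and $\omega^{(k)}$ is a product. Hence $\mathscr{U}^{(k)}$ is symmetric there. Off the support, where $\mathscr{K}^{(k)}$ vanishes, $\mathscr{U}^{(k)}$ can be set to any symmetric positive value, for instance $1$.
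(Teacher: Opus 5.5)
Your proposal is correct and follows essentially the same route as the paper. It uses per-layer multipliers $\mathscr{R}^{(k)}$ for output-stochasticity and a single shared $\boldsymbol{\beta}$ for the common stationarity constraint, then exponentiates the KKT condition for $\lambda_k>0$ to obtain $\mathscr{U}^{(k)}=\exp(-1-\mathscr{R}^{(k)}/\lambda_k)$ and $\mathbf{v}=\exp(-\boldsymbol{\beta})$. You also write out steps the paper only inherits from Propositions~\ref{prop:pro2} and~\ref{prop:4}: compactness, strict convexity, the reweighting constant, strict positivity on the support from the relative-interior assumption, and the symmetry of $\mathscr{U}^{(k)}$.
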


\begin{proof}
The existence and uniqueness arguments follow from those in
Propositions~\ref{prop:pro2} and~\ref{prop:4}. To derive the
multiplicative form, let $\mathscr{R}^{(k)}$ denote the Lagrange
multiplier associated with the output-stochasticity constraint of the
$k$-uniform layer and let $\boldsymbol{\beta}\in\mathbb{R}^{n}$ denote
the multiplier associated with the common stationarity constraint.

Using
$\omega^{(k)}_{j_1j_2\cdots j_{k-1}}
=
\prod_{p=1}^{k-1}\textbf{p}_{j_p}$,
the KKT condition for each $k$-uniform layer gives
\begin{equation*}
\lambda_k
\Big[
\log\big(
\frac{
\mathscr{M}^{(k)\star}_{j_1j_2\cdots j_k}
}{
\mathscr{K}^{(k)}_{j_1j_2\cdots j_k}
}
\big)+1
\Big]
+
\mathscr{R}^{(k)}_{j_1j_2\cdots j_{k-1}}
+
\lambda_k
\omega^{(k)}_{j_1j_2\cdots j_{k-1}}
\boldsymbol{\beta}_{j_k}
=0.
\end{equation*}
For $\lambda_k>0$, exponentiating the KKT condition yields
\eqref{eq:close-form2-merge}, where
$\mathscr{U}^{(k)}_{j_1 j_2 \cdots j_{k-1}}
=\exp(-1-\mathscr{R}^{(k)}_{j_1j_2\cdots j_{k-1}}/\lambda_k)$ and 
$\textbf{v}_{j}=\exp(-\boldsymbol{\beta}_{j})$. Since $\boldsymbol{\beta}$ is associated with the single common
stationarity constraint, $\textbf{v}$ is shared across all layers,
whereas $\mathscr{U}^{(k)}$ is layer-dependent. The symmetry of
$\mathscr{U}^{(k)}$ follows as in Proposition~\ref{prop:4}, which completes the proof.
\black
\end{proof}

Note that, unlike in the broadcasting case, the factor tensors  $\mathscr{U}^{(k)}$ are layer-dependent, reflecting the intrinsically polynomial nature of merging dynamics.  The receiver scaling vector $\textbf{v}$ is shared across layers due to the common stationarity constraint. Nonetheless, the formulation can still be interpreted as a one-step transport problem on a directed hypergraph.

\subsection{Ergodicity}

The merging dynamics induce nonlinear recursions on the simplex.  Let $\Delta_n:=\{\textbf{x}\in\mathbb{R}_+^n\mid \textbf{1}^{\top}\textbf{x}=1\}$ denote the probability simplex. Unlike broadcasting, the node marginals are not propagated by powers of a single Markov kernel. Instead, the evolution is governed by a polynomial map of degree $k-1$. Consequently, classical tools for analyzing ergodicity of linear Markov chains do not apply directly. Instead, the long-term behavior must be studied through the properties of nonlinear dynamical systems on the simplex. In particular, questions of ergodicity concern whether the polynomial map admits a unique stationary distribution and whether trajectories starting from arbitrary initial distributions converge to this fixed point. We first consider ergodicity for uniform merging. 

Based on the induced node-level uniform merging dynamics \eqref{eq:uniform-merge-dynamics}, define the homogeneous transition kernel
\begin{equation}\label{eq:homogeneous-kernel}
f(\textbf{p}) = \mathscr{M} \times_{1,2,\dots,k-1} \{\textbf{p},\textbf{p},\stackrel{k-1}{\dots} \textbf{p}\}.
\end{equation}
With this definition, the uniform merging dynamics can be written as $\textbf{p}_{t+1} = f(\textbf{p}_{t})$. {A stationary distribution therefore satisfies the tensor fixed-point equation $f(\textbf{p})=\textbf{p}$ which can be interpreted as a Z-eigenvector-type equation with eigenvalue one. In general, probability solutions to this equation need not be unique; consequently, neither uniqueness of the stationary distribution nor its global attraction is guaranteed without additional conditions \cite{benson2017spacey,fasino2020ergodicity,cui2026memory}.}

\begin{definition}[Strong Ergodicity \cite{saburov2019ergodicity}]\label{def:merge-strong-ergodic}
Let $\mathscr{M}$ be a merging transition tensor with recursion
$\textbf{p}_{t+1} = f(\textbf{p}_t)$.
The merging dynamics is said to be strongly ergodic if there exists a unique fixed point $\textbf{p}^\star\in \Delta_n$ such that $\textbf{p}_t\to\textbf{p}^\star$ for every initialization $\textbf{p}_0\in \Delta_n$.
\end{definition}

A convenient sufficient condition for strong ergodicity is contraction.
If $f$ is a strict contraction on $(\Delta_n,\|\cdot\|_1)$, then
Banach's fixed-point theorem guarantees the existence of a unique fixed point
and global convergence.
We work with the $\|\cdot\|_1$ norm, which induces the natural metric on $\Delta_n$
and is consistent with total variation arguments for polynomial stochastic operators
\cite{saburov2019ergodicity}. Since $\mathscr{M}$ is output-stochastic, each slice $\mathscr{M}_{j_1j_2\cdots j_{k-1}:}$ is a probability vector on the receiver node. Define the row-variation coefficient
\begin{equation}\label{eq:merge-rowvar}
\delta(\mathscr{M})
=
\max_{\substack{a_1,a_2,\dots,a_{k-1} \\b_1,b_2,\dots,b_{k-1}}}
\frac12\big\|\mathscr{M}_{a_1a_2\cdots a_{k-1}:}-\mathscr{M}_{b_1b_2\cdots b_{k-1}:}\big\|_1.
\end{equation}
This quantity represents the worst-case total variation distance between two conditional output distributions and coincides with Dobrushin's third-order ergodicity coefficient for stochastic tensors \cite[Definition 5.1(iii)]{saburov2019ergodicity}.

\begin{lemma}\label{lem:merge-Lip}
Let $\mathscr{M}\in \mathbb{R}_+^{n \times n \times \stackrel{k}{\cdots} \times n}$ be an output-stochastic merging transition tensor and let $f$ be the homogeneous transition kernel defined in \eqref{eq:homogeneous-kernel}.
Then
\begin{equation}\label{eq:merge-Lip}
\|f(\textbf{p})-f(\textbf{q})\|_1
\le
(k-1)\,\delta(\mathscr{M})\,\|\textbf{p}-\textbf{q}\|_1
\end{equation}
for all $\textbf{p},\textbf{q}\in\Delta_n$.
\end{lemma}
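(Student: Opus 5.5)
The plan is a telescoping (hybrid) argument that replaces one argument of the multilinear map at a time, followed by a Dobrushin-type contraction bound for each single replacement. Write $F(\textbf{x}_1,\dots,\textbf{x}_{k-1})=\mathscr{M}\times_{1,2,\dots,k-1}\{\textbf{x}_1,\dots,\textbf{x}_{k-1}\}$, so that $f(\textbf{p})=F(\textbf{p},\dots,\textbf{p})$. Then
\begin{equation*}
f(\textbf{p})-f(\textbf{q})=\sum_{r=1}^{k-1}\Big(F(\textbf{q},\dots,\textbf{q},\textbf{p},\textbf{p},\dots,\textbf{p})-F(\textbf{q},\dots,\textbf{q},\textbf{q},\textbf{p},\dots,\textbf{p})\Big),
\end{equation*}
where in the $r$th term the first $r-1$ slots carry $\textbf{q}$ and the last $k-1-r$ slots carry $\textbf{p}$. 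The two terms differ only in slot $r$. By multilinearity, the $r$th term equals $F(\textbf{q},\dots,\textbf{q},\textbf{p}-\textbf{q},\textbf{p},\dots,\textbf{p})$. The triangle inequality then reduces the claim to bounding each such term by $\delta(\mathscr{M})\|\textbf{p}-\textbf{q}\|_1$.

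For a fixed $r$, set $\textbf{d}=\textbf{p}-\textbf{q}$, which satisfies $\textbf{1}^\top\textbf{d}=0$. Split $\textbf{d}=\textbf{d}^+-\textbf{d}^-$ with $\textbf{1}^\top\textbf{d}^+=\textbf{1}^\top\textbf{d}^-=\tfrac12\|\textbf{d}\|_1=:c$. If $c=0$ there is nothing to prove. Otherwise, let $\boldsymbol{\mu}=\textbf{d}^+/c$ and $\boldsymbol{\nu}=\textbf{d}^-/c$, both in $\Delta_n$.

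The $r$th term then equals $c\,(\textbf{y}-\textbf{z})$, where
\begin{equation*}
\textbf{y}=F(\textbf{q},\dots,\boldsymbol{\mu},\dots,\textbf{p}),\qquad \textbf{z}=F(\textbf{q},\dots,\boldsymbol{\nu},\dots,\textbf{p}).
\end{equation*}
Since all slot vectors lie in $\Delta_n$ and $\mathscr{M}$ is output-stochastic, $\textbf{y}$ and $\textbf{z}$ are mixtures of the conditional output slices. Explicitly, $\textbf{y}=\sum_{a}\alpha_{a}\mathscr{M}_{a:}$ and $\textbf{z}=\sum_{b}\beta_{b}\mathscr{M}_{b:}$, where $a,b$ range over $(k-1)$-tuples and $\alpha,\beta$ are product probability weights.

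It therefore suffices to show $\tfrac12\|\textbf{y}-\textbf{z}\|_1\le\delta(\mathscr{M})$, since this gives a bound of $2c\,\delta=\delta\|\textbf{d}\|_1$ per term. This is the standard coupling step. Write
\begin{equation*}
\textbf{y}-\textbf{z}=\sum_{a,b}\alpha_a\beta_b\big(\mathscr{M}_{a:}-\mathscr{M}_{b:}\big),
\end{equation*}
using $\sum_b\beta_b=\sum_a\alpha_a=1$. Applying the triangle inequality and the definition \eqref{eq:merge-rowvar} bounds $\tfrac12\|\textbf{y}-\textbf{z}\|_1$ by $\max_{a,b}\tfrac12\|\mathscr{M}_{a:}-\mathscr{M}_{b:}\|_1=\delta(\mathscr{M})$.

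Summing over the $k-1$ telescoping terms gives \eqref{eq:merge-Lip}. The main point requiring care is the normalization step: the signed direction $\textbf{d}$ must be rewritten as $c$ times a difference of two probability vectors, using $\textbf{1}^\top\textbf{d}=0$, which holds because $\textbf{p},\textbf{q}\in\Delta_n$. Only after this rewriting can the mixture/coupling bound be applied, and it is what produces the factor $\tfrac12$ matching the definition of $\delta$. Back-symmetry is not needed for this argument. Only output-stochasticity and the simplex membership of $\textbf{p}$ and $\textbf{q}$ are used.
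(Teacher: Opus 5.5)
Your proof is correct and takes essentially the same route as the paper: it telescopes one pivot slot at a time via multilinearity, then bounds each term by a Dobrushin-type contraction. The difference is only in packaging: the paper writes the $s$th term as $\textbf{P}^{(s)\top}(\textbf{p}-\textbf{q})$ for an induced row-stochastic matrix and cites the standard Dobrushin bound, whereas you prove that bound inline through the split $\textbf{d}=\textbf{d}^+-\textbf{d}^-$ and the product-coupling estimate.
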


\begin{proof}
Since $f$ is multilinear in the first $k-1$ arguments, it can be shown that
\begin{equation*}
\displaystyle
f(\textbf{p})-f(\textbf{q})
=
\sum_{s=1}^{{ k-1}}
\mathscr M\times_{1,2,\dots,k-1}
\{\textbf{p},\dots,\textbf{p},\underbrace{(\textbf{p}-\textbf{q})}_{\text{$s$th}},\textbf{q},\dots,\textbf{q}\}.
\end{equation*}
Define the induced matrix $\textbf{P}^{(s)}\in\mathbb{R}^{n\times n}_{+}$ such that
\begin{equation*}
\displaystyle
    \textbf P^{(s)}
=
\mathscr M\times_{1,2,\dots,s-1}\{\textbf{p},\textbf{p}\dots,\textbf{p}\}\times_{s+1,s+2,\dots,k-1}\{\textbf{q},\textbf{q},\dots,\textbf{q}\}.
\end{equation*}
Then the $s$th term in the above summation can be written as $\textbf P^{(s)\top}(\textbf{p}-\textbf{q})$. By the output-stochasticity of $\mathscr M$, each $\textbf P^{(s)}$ is row-stochastic. Moreover, each row of $\textbf P^{(s)}$ is a convex combination of the conditional laws $\mathscr M_{a_1a_2\cdots a_{k-1}:}$, so the Dobrushin coefficient of $\textbf P^{(s)}$ is bounded by $\delta(\mathscr M)$. Therefore, the standard Dobrushin contraction bound for row-stochastic matrices gives
\[
\|\textbf P^{(s)\top}(\textbf{p}-\textbf{q})\|_1
\le
\delta(\mathscr M)\,\|\textbf{p}-\textbf{q}\|_1.
\]
Applying this to each term in $f(\textbf{p})-f(\textbf{q})$ and summing over $s=1,2,\dots,k-1$ yields
\begin{align*}
\|f(\textbf{p})-f(\textbf{q})\|_1
&\le
(k-1)\,\delta(\mathscr M)\,\|\textbf{p}-\textbf{q}\|_1, 
\end{align*}
which proves the claim.
\end{proof}

This result shows that the homogeneous transition kernel $f$ is Lipschitz on $(\Delta_n, \|\cdot\|_1)$ with Lipschitz constant $(k-1)\delta(\mathscr{M})$. In particular, if this constant is strictly less than $1$, then $f$ is a strict contraction \cite[Theorem~5.4(ii)]{saburov2019ergodicity}. Moreover, the above lemma extends naturally to non-uniform merging dynamics. Similarly, define the polynomial transition kernel 
\begin{equation*}
    \displaystyle  F(\textbf{p})=\sum^K_{k=2}\lambda_k\mathscr{M}^{(k)} \times_{1,2,\dots,k-1} \{\textbf{p},\textbf{p},\stackrel{k-1}{\dots} \textbf{p}\} = \sum^K_{ k=2}\lambda_k f_k(\textbf{p}).
\end{equation*}
\begin{lemma}
    If $\sum^K_{ k=2}\lambda_k\,(k-1)\,\delta(\mathscr{M}^{(k)}) < 1$,
then $F$ is a strict contraction on $(\Delta_n,\|\cdot\|_1)$. Consequently, $F$ admits a unique fixed point $\textbf{p}^\star\in\Delta_n$, and for every initialization $\textbf{p}_0\in\Delta_n$, it follows that
\[
\|\bp_t-\bp^\star\|_1
\le
\Big(\sum_{k=2}^{K}\lambda_k\,(k-1)\,\delta(\mathscr{M}^{(k)})\Big)^t\,\|\bp_0-\bp^\star\|_1.
\]
In particular, the non-uniform merging recursion is strongly ergodic in the sense of Definition~\ref{def:merge-strong-ergodic}.
\end{lemma}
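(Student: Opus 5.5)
The plan is to reduce the claim to the uniform Lipschitz bound of Lemma~\ref{lem:merge-Lip}, applied layer by layer, and then invoke Banach's fixed-point theorem on the complete metric space $(\Delta_n,\|\cdot\|_1)$.

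First I will check that $F$ maps $\Delta_n$ into itself. For $\bp\in\Delta_n$, each $f_k(\bp)$ is nonnegative because $\mathscr{M}^{(k)}\ge 0$. By output-stochasticity, $\mathscr{M}^{(k)}\times_k\textbf{1}=\mathscr{I}$, so
\[
\textbf{1}^\top f_k(\bp)=\mathscr{I}\times_{1,\dots,k-1}\{\bp,\dots,\bp\}=(\textbf{1}^\top\bp)^{k-1}=1 .
\]
Since the $\lambda_k$ are nonnegative and sum to one, $F(\bp)=\sum_k\lambda_k f_k(\bp)$ is a convex combination of points of $\Delta_n$ and therefore lies in $\Delta_n$. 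This self-map property is needed for Banach's theorem.

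Second, for $\bp,\bq\in\Delta_n$, the triangle inequality and Lemma~\ref{lem:merge-Lip} applied to each layer give
\[
\|F(\bp)-F(\bq)\|_1\le\sum_{k=2}^K\lambda_k\|f_k(\bp)-f_k(\bq)\|_1\le\Big(\sum_{k=2}^K\lambda_k(k-1)\delta(\mathscr{M}^{(k)})\Big)\|\bp-\bq\|_1 .
\]
Write $L$ for the bracketed constant. The hypothesis says $L<1$, so $F$ is a strict contraction. Layers with $\lambda_k=0$ contribute nothing, so they cause no issue. For $k=2$, Lemma~\ref{lem:merge-Lip} reduces to the usual Dobrushin bound, so I will note that the lemma covers that case as well.

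Third, $\Delta_n$ is a closed subset of $\mathbb{R}^n$ and hence complete under $\|\cdot\|_1$. Banach's theorem then gives a unique fixed point $\bp^\star\in\Delta_n$. The rate follows by induction: $\|\bp_{t}-\bp^\star\|_1=\|F(\bp_{t-1})-F(\bp^\star)\|_1\le L\|\bp_{t-1}-\bp^\star\|_1$, which yields $L^t\|\bp_0-\bp^\star\|_1\to 0$. This is exactly strong ergodicity as in Definition~\ref{def:merge-strong-ergodic}.

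The argument is routine. The only step needing care is the invariance of the simplex, since Banach's theorem requires $F(\Delta_n)\subseteq\Delta_n$. This rests on the identity $\textbf{1}^\top f_k(\bp)=(\textbf{1}^\top\bp)^{k-1}$, and I will spell out that identity explicitly.
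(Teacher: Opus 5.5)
Your proposal is correct and follows essentially the same route as the paper. Both bound $\|F(\bp)-F(\textbf{q})\|_1$ by the triangle inequality, apply Lemma~\ref{lem:merge-Lip} to each layer, and conclude with Banach's fixed-point theorem. Your explicit check that $F(\Delta_n)\subseteq\Delta_n$ via $\textbf{1}^\top f_k(\bp)=(\textbf{1}^\top\bp)^{k-1}$, together with the completeness of $\Delta_n$, supplies hypotheses of Banach's theorem that the paper leaves implicit.
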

\begin{proof}
    By convexity of $\|\cdot\|_1$ and the definition of $F$, it can be shown that
\begin{align*}
\|F(\textbf{p})-F(\textbf{q})\|_1
&=
\Big\|\sum^K_{ k=2}\lambda_k\big(f_k(\textbf{p})-f_k(\textbf{q})\big)\Big\|_1\\
&\le
\sum^K_{ k=2}\lambda_k\,\|f_k(\textbf{p})-f_k(\textbf{q})\|_1.    
\end{align*}
Applying Lemma~\ref{lem:merge-Lip} to each layer yields
\[
\|f_k(\textbf{p})-f_k(\textbf{q})\|_1
\le
(k-1)\,\delta(\mathscr{M}^{(k)})\,\|\textbf{p}-\textbf{q}\|_1,
\]
and hence
\[
\|F(\textbf{p})-F(\textbf{q})\|_1
\le
\Big(\sum^K_{ k=2}\lambda_k\,(k-1)\,\delta(\mathscr{M}^{(k)})\Big)\,\|\textbf{p}-\textbf{q}\|_1.
\]
Therefore, if $\sum_{k=2}^{K}\lambda_k\,(k-1)\,\delta(\mathscr{M}^{(k)}) < 1$, $F$ is a strict contraction. The remaining claims follow from Banach's fixed-point theorem.
\end{proof}

The preceding lemmas establish that both uniform and non-uniform merging dynamics are strongly ergodic whenever the corresponding Lipschitz constants are strictly less than one. Under these conditions, the dynamics admits a unique fixed point $\textbf{p}^\star\in\Delta_n$, and all trajectories initialized in the probability simplex converge globally to this fixed point. Importantly, the polynomial, multilinear structure of merging dynamics ensures that long-term behavior is governed by higher-order aggregation, in contrast to broadcasting dynamics, where convergence is dictated solely by projected pairwise interactions and linear matrix theory. {Note that the contraction condition in Lemma~3 provides one sufficient criterion that rules out this ambiguity: when it holds, the prescribed stationary distribution is the unique fixed point in the simplex and is globally attracting. When the contraction condition is not satisfied, this conclusion no longer follows; the dynamics may admit multiple stationary distributions or exhibit dependence on the initial distribution. Failure of the condition does not, however, imply nonconvergence, since the condition is sufficient rather than necessary.}

\subsection{Sinkhorn--Schr\"odinger Scaling}

Similar to broadcasting, we can formulate  the non-uniform merging
 inference problem on general directed hypergraphs as a KL projection onto
two families of linear constraints.
Unlike in the broadcasting case, however, the stationarity constraint in
\eqref{eq:constraint2-merge} is a weighted marginal constraint, with the
weights determined by the given distribution $\textbf{p}$. Consequently,
as shown in Proposition~\ref{prop:5}, the receiver scaling enters through
the pivot-dependent exponent
$\omega^{(k)}_{j_1 j_2\cdots j_{k-1}}$ rather than an ordinary mode-wise
scaling. Since $\textbf{p}$ is fixed, these weights are fixed and the
constraints remain affine in the transition tensors.

For strictly positive layer-dependent scaling symmetric tensors
$\mathscr{U}^{(k)}
\in
\mathbb{R}^{n\times n\times\stackrel{k-1}{\cdots}\times n}_{++}$
and  a common vector $\textbf{v}\in\mathbb{R}^n_{++}$,
define the scaled merging transition tensors
according to \eqref{eq:close-form2-merge}, and denote them by
$\mathscr{M}^{(k)}(\mathscr{U}^{(k)},\textbf{v})$.
The scaling symmetric tensors $\mathscr{U}^{(k)}$ act on the pivot
modes, while the  common vector $\textbf{v}$ acts on the
receiver mode  through the power-weighted scaling. 
Under this parametrization, define
\begin{equation}\label{eq:marginals-merge}
\begin{aligned}
\zeta^{(k)}(\mathscr{U}^{(k)},\textbf{v})
&=
\mathscr{M}^{(k)}(\mathscr{U}^{(k)},\textbf{v})
\times_k\textbf{1},
\\
\xi(\{\mathscr{U}^{(k)}\}_{k=2}^{K},\textbf{v})
&=
\sum^{K}_{k= 2}
\lambda_k
\mathscr{M}^{(k)}(\mathscr{U}^{(k)},\textbf{v})
\times_{1,2,\dots,k-1}
\{\textbf{p},\textbf{p},\stackrel{k-1}{\dots},\textbf{p}\}.
\end{aligned}
\end{equation}
The two feasibility conditions in \eqref{eq:constraint2-merge} are
therefore equivalent to
\[
\zeta^{(k)}(\mathscr{U}^{(k)},\textbf{v})=\mathscr{I}
\quad\text{and}\quad
\xi(\{\mathscr{U}^{(k)}\}_{k=2}^{K},\textbf{v})
=\textbf{p}.
\]

For fixed $\textbf{v}$, the output marginal is affine in
$\mathscr{U}^{(k)}$. The corresponding KL projection gives the
usual Sinkhorn-type normalization
\begin{equation}\label{eq:merge-U-update-nonuniform}
\mathscr U^{(k)}
\leftarrow
\mathscr U^{(k)}
\odot
\left(
\mathscr I
\oslash
\zeta^{(k)}(\mathscr U^{(k)},\textbf v)
\right).
\end{equation}
This is the standard KL/Bregman projection onto the output-marginal
constraint.
For fixed $\{\mathscr{U}^{(k)}\}_{k=2}^{K}$, the receiver marginal is
not affine in $\textbf{v}$ because of the pivot-dependent exponent. Nevertheless, its $j$th component depends
only on the scalar $\textbf{v}_j$. For $s>0$, define
\begin{equation}\label{eq:Phi-merge}
\displaystyle
\Phi_j(s)
=
\sum_{k=2}^{K}\lambda_k
\sum_{j_1=1}^n\sum_{j_2=1}^n\cdots\sum_{j_{k-1}=1}^n
\omega^{(k)}_{j_1j_2\cdots j_{k-1}}
\mathscr{K}^{(k)}_{j_1j_2\cdots j_{k-1}j}
\mathscr{U}^{(k)}_{j_1j_2\cdots j_{k-1}}
s^{\omega^{(k)}_{j_1j_2\cdots j_{k-1}}}.
\end{equation}
By construction,
$\xi_j(\{\mathscr{U}^{(k)}\}_{k=2}^{K},\textbf{v})
=
\Phi_j(\textbf{v}_j)$.
Hence, the stationarity constraint reduces to $\Phi_j(\textbf{v}_j)=\textbf{p}_j$ independently for each receiver.

\begin{lemma}[Well-posedness of receiver scaling]\label{lem:Phi-merge}
Assume that $\mathcal{M}_{\le K}(\textbf{p})$ is nonempty. For fixed
strictly positive scaling tensors
$\{\mathscr{U}^{(k)}\}_{k=2}^{K}$, the function
$\Phi_j:\mathbb{R}_{++}\rightarrow\mathbb{R}_{++}$ defined in
\eqref{eq:Phi-merge} is a smooth, strictly increasing bijection.
Consequently, for  $j=1,2,\dots,n$, there exists a unique positive
receiver scaling
$
\textbf{v}_j=\Phi_j^{-1}(\textbf{p}_j).
$
\end{lemma}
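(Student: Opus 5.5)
Fix $j$ and write $\Phi_j$ as a finite sum of power functions: $\Phi_j(s)=\sum_{\ell} c_\ell\, s^{\omega_\ell}$. Here $\ell$ runs over the layer index $k$ together with the pivot tuples $(j_1,\dots,j_{k-1})$. The coefficients are
\[
c_\ell=\lambda_k\,\omega^{(k)}_{j_1\cdots j_{k-1}}\,\mathscr{K}^{(k)}_{j_1\cdots j_{k-1}j}\,\mathscr{U}^{(k)}_{j_1\cdots j_{k-1}}\ge 0,
\]
and the exponents are $\omega_\ell=\prod_{p}\textbf{p}_{j_p}\in(0,1]$. The exponents are strictly positive because $\textbf{p}\in\mathbb{R}^n_{++}$. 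Every term with $c_\ell>0$ is then a $C^\infty$, strictly increasing function on $\mathbb{R}_{++}$, so $\Phi_j$ is smooth and nondecreasing on $\mathbb{R}_{++}$. It is strictly increasing, with strictly positive derivative
\[
\Phi_j'(s)=\sum_\ell c_\ell\,\omega_\ell\, s^{\omega_\ell-1}>0,
\]
as soon as at least one coefficient $c_\ell$ is positive.

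The first real step is therefore to show that some $c_\ell>0$, and this is where the assumption $\mathcal M_{\le K}(\textbf{p})\neq\emptyset$ enters. I would take a feasible family $\{\mathscr{M}^{(k)}\}$ and read off the $j$th component of the stationarity constraint:
\[
\sum_k\lambda_k\sum_{j_1,\dots,j_{k-1}}\omega^{(k)}_{j_1\cdots j_{k-1}}\mathscr{M}^{(k)}_{j_1\cdots j_{k-1}j}=\textbf{p}_j>0.
\]
Hence there is a layer with $\lambda_k>0$ and a pivot tuple with $\mathscr{M}^{(k)}_{j_1\cdots j_{k-1}j}>0$. The support constraint forces $\mathscr{A}^{(k)}_{j_1\cdots j_{k-1}j}>0$, and therefore $\mathscr{K}^{(k)}_{j_1\cdots j_{k-1}j}>0$. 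Since $\mathscr{U}^{(k)}$ is strictly positive and $\omega>0$, the corresponding coefficient $c_\ell$ is positive. The same argument shows that $\Phi_j$ takes values in $\mathbb{R}_{++}$.

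Next I establish surjectivity onto $\mathbb{R}_{++}$ from the limits at the endpoints. As $s\to0^+$, each $s^{\omega_\ell}\to0$ because $\omega_\ell>0$, so $\Phi_j(s)\to0$. As $s\to\infty$, any term with $c_\ell>0$ diverges, so $\Phi_j(s)\to\infty$. Continuity together with the intermediate value theorem gives $\Phi_j(\mathbb{R}_{++})=(0,\infty)$, and strict monotonicity gives injectivity, so $\Phi_j$ is a bijection. The inverse function theorem, using $\Phi_j'>0$, upgrades $\Phi_j^{-1}$ to a smooth function. Evaluating at $\textbf{p}_j>0$ then yields the unique $\textbf{v}_j=\Phi_j^{-1}(\textbf{p}_j)$, and this can be done independently for each $j$ because $\xi_j$ depends only on $\textbf{v}_j$.

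I expect the main obstacle to be the nondegeneracy step: making sure the feasibility assumption really gives a positive coefficient in every coordinate $j$. The point is that every receiver must carry some hyperedge in a layer with positive weight; otherwise $\Phi_j\equiv0$ and the bijection claim fails. This is exactly what stationarity with $\textbf{p}_j>0$ enforces. The convention that zero-weight layers are omitted handles any ambiguity about $\lambda_k=0$.
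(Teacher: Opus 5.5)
Your proposal is correct and follows the paper's proof essentially step for step. Both arguments use strictly positive exponents for smoothness, feasibility together with $\textbf{p}_j>0$ to get a nonzero coefficient, a strictly positive derivative, the limits at $0^+$ and $+\infty$, and the intermediate value theorem with monotonicity. Your explicit use of the support constraint, to pass from a positive entry of a feasible $\mathscr{M}^{(k)}$ to a positive coefficient of $\Phi_j$, and your appeal to the inverse function theorem for smoothness of $\Phi_j^{-1}$, only make precise what the paper leaves implicit.
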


\begin{proof}
Since $\textbf{p}\in\mathbb{R}^n_{++}$, all exponents
$\omega^{(k)}_{j_1j_2\cdots j_{k-1}}$ are strictly positive, and hence
$\Phi_j$ is smooth on $(0,\infty)$. Moreover, nonemptiness of
$\mathcal{M}_{\le K}(\textbf{p})$ and $\textbf{p}_j>0$ imply that, for
each receiver $j$, at least one coefficient in \eqref{eq:Phi-merge} is
strictly positive. Otherwise, the $j$th component of the stationarity
constraint could not be positive.
Differentiating \eqref{eq:Phi-merge} gives
\begin{equation*}
\begin{split}
\Phi_j'(s)
=
\sum_{k=2}^{K}\lambda_k
\sum_{j_1=1}^n\sum_{j_2=1}^n\cdots\sum_{j_{k-1}=1}^n
\left(
\omega^{(k)}_{j_1j_2\cdots j_{k-1}}
\right)^2\times
\mathscr{K}^{(k)}_{j_1j_2\cdots j_{k-1}j}
\mathscr{U}^{(k)}_{j_1j_2\cdots j_{k-1}}
s^{\omega^{(k)}_{j_1j_2\cdots j_{k-1}}-1}
>0 .
\end{split}
\end{equation*}
Therefore, $\Phi_j$ is strictly increasing on $(0,\infty)$. Since all exponents are strictly positive,  $\lim_{s\rightarrow0^+}\Phi_j(s)=0$. Since at least one coefficient is strictly positive, $\lim_{s\rightarrow+\infty}\Phi_j(s)=+\infty$.
Thus, $\Phi_j$ maps $\mathbb{R}_{++}$ continuously and strictly
monotonically onto $\mathbb{R}_{++}$. The intermediate value theorem gives
existence of a positive solution to
$\Phi_j(\textbf{v}_j)=\textbf{p}_j$, while strict monotonicity gives
uniqueness.
\end{proof}

Thus, receiver scaling update is given component-wise by
\begin{equation}\label{eq:merge-v-update-nonuniform}
\textbf{v}_j
\leftarrow
\Phi_j^{-1}(\textbf{p}_j).
\end{equation}
In general, $\Phi_j^{-1}$ need not admit a closed-form expression.
By Lemma~\ref{lem:Phi-merge}, $\Phi_j$ is continuous and strictly
increasing from $0$ to $+\infty$, so $\textbf{v}_j$ can be computed
uniquely by solving
$\Phi_j(s)-\textbf{p}_j=0$.
A valid bracket is obtained by geometric expansion from $s=1$ by
successively doubling or halving the endpoint as needed. Finite
termination follows from the limiting properties of $\Phi_j$. We use
Brent's method \cite{brent1971algorithm} for the resulting scalar
solve. For fixed $\{\mathscr{U}^{(k)}\}_{k=2}^{K}$, the receiver
equations are decoupled and can be solved independently.

These updates iteratively enforce the two constraints by normalizing
the pivot and receiver marginals in turn. Upon convergence, the
resulting tensors
\[
\mathscr{M}^{(k)\star}
=
\mathscr{M}^{(k)}
(\mathscr{U}^{(k)\star},\textbf{v}^{\star})
\]
for all $k=2,3,\dots,K$ constitute the solution to the KL projection
problem. Algorithm~\ref{alg:sinkhorn-merge-nonuniform} summarizes this
procedure. In particular, the
$\mathscr{U}^{(k)}$-updates impose output-stochasticity in each layer,
while the $\textbf v$-updates impose the stationarity
constraint through  the weighted receiver marginal.
Unlike broadcasting, the scaling tensors
$\mathscr{U}^{(k)}$ are layer-dependent, while the receiver
scaling vector $\textbf v$ is shared across all layers.

\begin{proposition}[Convergence]
\label{prop:bregman-merge-convergence}
Under the regularity conditions for dual coordinate ascent,
Algorithm~\ref{alg:sinkhorn-merge-nonuniform} admits  a fixed point
$(\{\mathscr{U}^{(k)\star}\}_{k=2}^{K},\textbf{v}^{\star})$
corresponding to the unique optimal merging transition tensors,
and the  induced transition tensors converge to the  optimal
solution at a linear rate.
\end{proposition}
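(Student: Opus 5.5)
The argument mirrors the broadcasting convergence proposition: identify each step of Algorithm~\ref{alg:sinkhorn-merge-nonuniform} as an exact block maximization of a smooth concave dual, then invoke the linear convergence theory of \cite{luo1992convergence}.

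First, the dual. Take the layer-weighted Lagrangian from the proof of Proposition~\ref{prop:5}, with multipliers $\mathscr{R}^{(k)}$ for output-stochasticity in each layer and a common $\boldsymbol\beta$ for stationarity, written relative to $\mathscr{K}^{(k)}$. Minimizing over the primal tensors entrywise on the support gives the Gibbs form \eqref{eq:close-form2-merge}, with $\mathscr{U}^{(k)}=\exp(-1-\mathscr{R}^{(k)}/\lambda_k)$ and $\textbf{v}=\exp(-\boldsymbol\beta)$. Substituting back yields the dual
\[
g(\{\mathscr{R}^{(k)}\},\boldsymbol\beta)=-\sum_{k}\lambda_k\sum_{j_1,\dots,j_k}\mathscr{M}^{(k)}_{j_1\cdots j_k}(\mathscr{R},\boldsymbol\beta)-\sum_k\langle \mathscr{R}^{(k)},\mathscr{I}\rangle-\textbf{p}^\top\boldsymbol\beta ,
\]
a negative sum of exponentials of affine functions of the log-potentials plus a linear term. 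Hence $g$ is smooth and concave, and it has exactly the structure required by \cite{luo1992convergence}: a strictly convex essentially smooth function composed with a linear map, plus a linear term.

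Next, the block updates. The gradient of $g$ in $\mathscr{R}^{(k)}$ vanishes exactly when $\zeta^{(k)}=\mathscr{I}$. Since $\zeta^{(k)}$ scales linearly in $\mathscr{U}^{(k)}$ entrywise, update \eqref{eq:merge-U-update-nonuniform} attains this exactly, so it is the exact maximizer of $g$ in that block. The gradient in $\boldsymbol\beta_j$ vanishes exactly when $\Phi_j(\textbf{v}_j)=\textbf{p}_j$. By Lemma~\ref{lem:Phi-merge} this equation has a unique solution, so update \eqref{eq:merge-v-update-nonuniform} is the unique exact block maximizer, and the receiver blocks decouple. Since $g$ is strictly concave within each block, the blocks are well posed. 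The algorithm is therefore cyclic exact block coordinate ascent on $g$. The blocks are the $\mathscr{R}^{(k)}$ and $\boldsymbol\beta$; the $\mathscr{R}^{(k)}$ are mutually independent given $\boldsymbol\beta$, so updating them together is a single block.

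For a fixed point and its optimality: the relative-interior assumption of Proposition~\ref{prop:5} gives Slater's condition on the support. This yields strong duality and attainment of the dual optimum, and the optimal dual set is nonempty. Any dual optimum, through the Gibbs map, gives a primal point satisfying both constraints and the KKT conditions. Since the primal problem is strictly convex, that point is the unique $\{\mathscr{M}^{(k)\star}\}$, and it is a fixed point of both updates. Conversely, any fixed point makes both block gradients vanish, so it is dual optimal.

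For linear convergence, Luo and Tseng's theorem gives linear convergence of $g$ to its optimum and of the iterates to the optimal dual set, in the log-potentials. The dual set need not be a singleton, because there is a gauge direction between the $\mathscr{R}^{(k)}$ and $\boldsymbol\beta$ whenever the constraints are linearly dependent on the support. The primal map is smooth and constant on that set, so the transition tensors still converge linearly.

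The main obstacle is verifying the \cite{luo1992convergence} hypotheses. Unlike broadcasting, the $\boldsymbol\beta$-dependence enters with pivot-dependent coefficients $\omega^{(k)}$. This is still linear in $\boldsymbol\beta$, since the exponent of $\mathscr{M}$ is $-\lambda_k\omega\boldsymbol\beta_{j_k}$, so the composite structure survives. Boundedness of the iterates also has to be checked; I would get it from the level sets of $g$ modulo the gauge directions, which requires that the initial iterate makes $g$ finite. Both points are absorbed into the stated ``regularity conditions for dual coordinate ascent.''
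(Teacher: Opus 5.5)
Your proposal is correct and follows essentially the same route as the paper: both identify the $\mathscr{U}^{(k)}$- and $\textbf{v}$-updates as exact block maximizations of the concave dual. The paper writes this dual as $\mathcal{G}$ in the scaling variables, with block gradients $\lambda_k(1-\zeta^{(k)})$ and $\textbf{p}_j-\Phi_j(\textbf{v}_j)$ in log coordinates. It then invokes Luo--Tseng for linear convergence to the dual optimal set and transfers this to the unique primal optimizer, as you do.

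Your Slater/dual-attainment argument for the fixed point and your remark on the non-singleton dual set make explicit what the paper leaves implicit. In fact the linear dependence is always present, not only in special cases. The sum of the stationarity constraints is implied by the $\lambda_k\omega^{(k)}$-weighted sum of the output-stochasticity constraints. This gives the gauge direction $(\mathscr{R}^{(k)},\boldsymbol{\beta})\mapsto(\mathscr{R}^{(k)}-c\lambda_k\omega^{(k)},\boldsymbol{\beta}+c\textbf{1})$.
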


\begin{proof}
The proof follows the same dual viewpoint as in the broadcasting case.
The updates \eqref{eq:merge-U-update-nonuniform} and \eqref{eq:merge-v-update-nonuniform} alternately enforce the two affine constraint families in \eqref{eq:constraint2-merge}.
To characterize the convergence and its rate, consider the dual objective.
Up to an additive constant, it can be written in terms of the positive
scaling variables as
\begin{equation*}
\begin{aligned}
\mathcal{G}(\{\mathscr{U}^{(k)}\},\textbf{v})
={}&
-\sum_{k=2}^{K}\lambda_k
\sum_{j_1=1}^n\sum_{j_2=1}^n\cdots\sum_{j_k=1}^n
\mathscr{K}^{(k)}_{j_1j_2\cdots j_k}
\mathscr{U}^{(k)}_{j_1j_2\cdots j_{k-1}}
\textbf{v}_{j_k}^{\omega^{(k)}_{j_1j_2\cdots j_{k-1}}}
\\
&+
\sum_{k=2}^{K}\lambda_k
\sum_{j_1=1}^n\sum_{j_2=1}^n\cdots\sum_{j_{k-1}=1}^n
\log
\mathscr{U}^{(k)}_{j_1j_2\cdots j_{k-1}}
+
\sum_{j=1}^{n}
\textbf{p}_j\log\textbf{v}_j .
\end{aligned}
\end{equation*}
Viewed in the logarithmic scaling variables, differentiation with
respect to the pivot scaling gives
\begin{equation*}
\frac{\partial\mathcal{G}}
{\partial\log\mathscr{U}^{(k)}_{j_1j_2\cdots j_{k-1}}}
=
\lambda_k
\left(
1-\zeta^{(k)}_{j_1j_2\cdots j_{k-1}}
\right).
\end{equation*}
Therefore,
\eqref{eq:merge-U-update-nonuniform} satisfies the corresponding
block optimality condition exactly.

Similarly, differentiation with respect to the logarithmic receiver
scaling gives
$
\frac{\partial\mathcal{G}}
{\partial\log\textbf{v}_j}
=
\textbf{p}_j-\Phi_j(\textbf{v}_j).
$
The receiver block is strictly concave since
$
\frac{\partial^2\mathcal{G}}
{\partial(\log\textbf{v}_j)^2}
=
-\textbf{v}_j\Phi_j'(\textbf{v}_j)
<0.
$
Hence, Lemma~\ref{lem:Phi-merge} shows that
$\textbf{v}_j=\Phi_j^{-1}(\textbf{p}_j)$ is the unique maximizer of
each receiver block. Therefore, each step of
Algorithm~\ref{alg:sinkhorn-merge-nonuniform} is an exact block
maximization of the dual, and the iteration is an exact block
coordinate ascent method.
After entries outside the prescribed support are fixed at zero and
omitted, the support-relative-interior condition in
Proposition~\ref{prop:5}, together with the positivity of all included
layer weights, ensures that all block updates are well defined.  Under the regularity conditions of Theorem 2.1 in \cite{luo1992convergence}, the dual iterates converge at least linearly to the optimal dual solution set, measured per complete scaling cycle. Since the primal optimizer is unique, the corresponding induced transition tensors converge at least linearly to the unique primal optimizer.
\end{proof}

\begin{algorithm}[t]
\caption{Sinkhorn--Schr\"odinger Scaling for Merging}
\label{alg:sinkhorn-merge-nonuniform}
\begin{algorithmic}[1]
\Require State distribution $\textbf{p}\in\mathbb R^n_{++}$ with
$\textbf{1}^{\top}\textbf{p}=1$, adjacency tensors
$\{\mathscr A^{(k)}\}_{k=2}^{K}$, weights
$\{\lambda_k\}_{k=2}^{K}$, tolerance $\varepsilon>0$

\State initialize $\mathscr U^{(k)}\gets\mathscr I$
(the all-one tensor)  for $k=2,3,\dots,K$
and $\textbf v\gets\textbf 1$

\While{not converged}

\For{$k=2,3,\dots,K$}

\State Compute
$\zeta^{(k)}(\mathscr U^{(k)},\textbf v)$
based on \eqref{eq:marginals-merge}

\State
$\mathscr U^{(k)}
\leftarrow
\mathscr U^{(k)}
\odot
\left(
\mathscr I
\oslash
\zeta^{(k)}(\mathscr U^{(k)},\textbf v)
\right)$

\EndFor

\For{$j=1,2,\dots,n$}

\State
$\textbf v_j
\leftarrow
\Phi_j^{-1}(\textbf p_j)$

\EndFor

\EndWhile

\State \textbf{return}
$\mathscr M^{(k)\star}
\gets
\mathscr M^{(k)}(\mathscr U^{(k)},\textbf v)$
for $k=2,3,\dots,K$

\end{algorithmic}
\end{algorithm}

In summary, the Sinkhorn--Schr\"odinger scaling provides an efficient
and principled way to compute the optimal merging transition tensors for
non-uniform directed hypergraphs. By parameterizing each layer with
strictly positive scaling tensors on the pivot nodes and
 a common scaling vector on the receiver node,
and enforcing the pivot and receiver marginals through alternating
multiplicative updates, the algorithm converges to the unique solution
of the KL projection problem. This procedure preserves the structural
constraints of the merging formulation, including layerwise
output-stochasticity and stationarity of the mixture dynamics. Moreover,
the resulting iterative scheme admits a natural interpretation within a
dual block coordinate ascent framework,  with the linear rate holding
under the stated regularity conditions,
making the method practical for large-scale hypergraph inference
involving higher-order merging interactions.

\black

\section{Numerical experiments}\label{sec:num}
All numerical experiments, including broadcasting MERWs, merging MERWs, and a real-world application on MovieLens next-item prediction, were conducted using MATLAB R2025b on a machine equipped with an M1 Pro CPU and 16 GB of memory. The code used for these experiments is available at { \url{https://github.com/dytroshut/hyperMERW}}.

\subsection{Broadcasting MERWs}

We first consider broadcasting on an $8$-node directed $3$-uniform
hypergraph, where each hyperedge has one tail node and two head nodes.
{ The $3$-uniform adjacency support is formed from the node triples of two groups
$
\{v_1,v_2,v_3\},\allowbreak
\{v_1,v_2,v_4\},\allowbreak
\{v_1,v_3,v_4\},\allowbreak
\{v_2,v_3,v_4\},
$
and
$
\{v_5,v_6,v_7\},\allowbreak
\{v_5,v_6,v_8\},\allowbreak
\{v_5,v_7,v_8\},\allowbreak
\{v_6,v_7,v_8\}.
$
For each listed triple, every node acts in turn as the tail node and
broadcasts to the other two nodes. For example, the triple $\{v_1,v_2,v_3\}$ generates three directed hyperedges $v_1\rightarrow \{v_2,v_3\}$, $v_2\rightarrow \{v_1,v_3\}$, and $v_3\rightarrow \{v_1,v_2\}$. We additionally include the two
cross-group directed hyperedges
$v_4\rightarrow\{v_6,v_7\}$ and
$v_6\rightarrow\{v_2,v_3\}$.}
We prescribe the stationary distribution
\[
\textbf{p}=
\begin{bmatrix}
0.07 & 0.07 & 0.07 & 0.07 & 0.18 & 0.18 & 0.18 & 0.18
\end{bmatrix}^{\top}
\]
such that $\textbf{1}^{\top}\textbf{p}=1$, and use the
degree-normalized adjacency tensor of the hypergraph as the
nonnegative reference tensor. We then compute the maximum-entropy
broadcasting transition tensor by solving Problem~\ref{prob:1} and
form the projected transition kernel, given by
\begin{equation*}
\textbf{P}=
\begin{bmatrix}
0     & 0.326 & 0.326 & 0.347 & 0     & 0     & 0     & 0\\
0.370 & 0     & 0.303 & 0.326 & 0     & 0     & 0     & 0\\
0.370 & 0.303 & 0     & 0.326 & 0     & 0     & 0     & 0\\
0.259 & 0.215 & 0.215 & 0     & 0     & 0.156 & 0.156 & 0\\
0     & 0     & 0     & 0     & 0     & 0.316 & 0.330 & 0.353\\
0     & 0.061 & 0.061 & 0     & 0.300 & 0     & 0.278 & 0.300\\
0     & 0     & 0     & 0     & 0.346 & 0.307 & 0     & 0.346\\
0     & 0     & 0     & 0     & 0.353 & 0.316 & 0.330 & 0
\end{bmatrix}.
\end{equation*}

\begin{figure}[t]
    \centering
    \includegraphics[width=.5\linewidth]{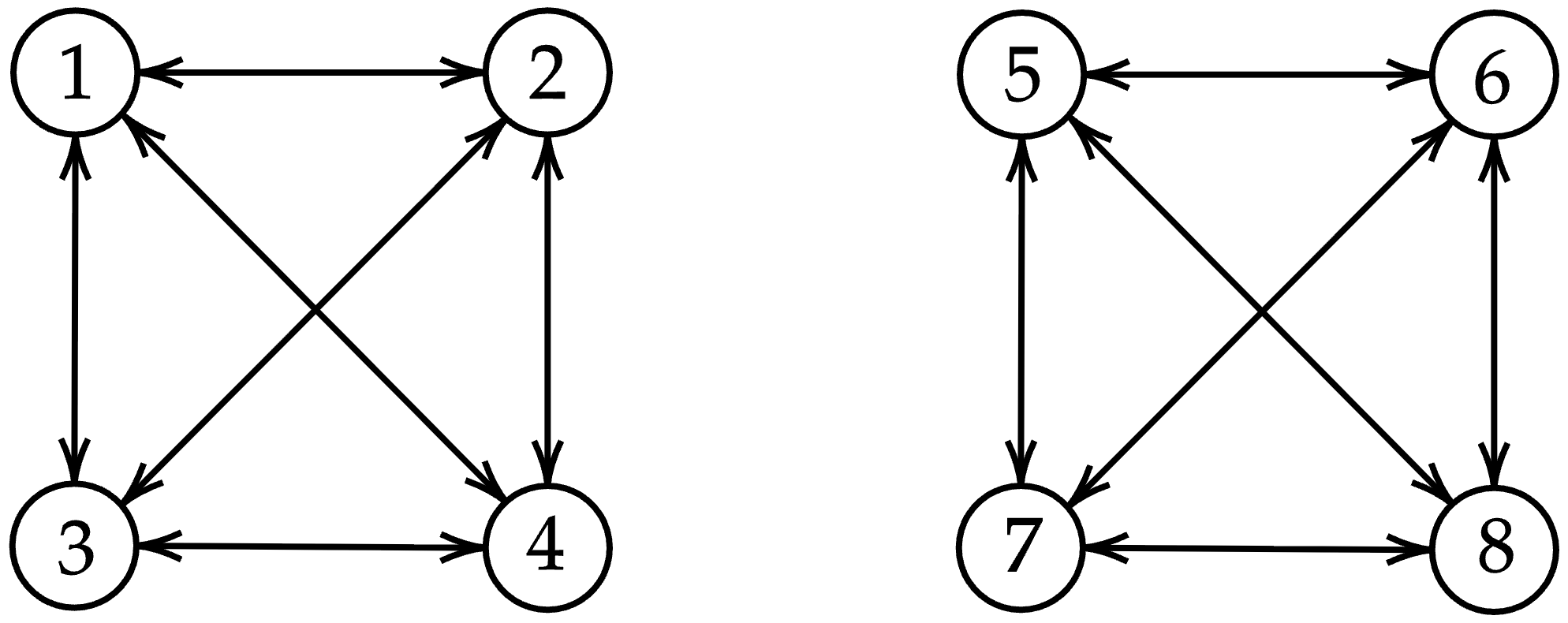}
    \caption{{ Pairwise adjacency support added in the non-uniform
    broadcasting example. Every directed edge among
    $v_1,v_2,v_3,v_4$ and among $v_5,v_6,v_7,v_8$, including a
    self-loop at each node, is included. The self-loops are omitted
    from the visualization for clarity.}}
    \label{fig:broadcast-pairwise-topology}
\end{figure}

\begin{figure}[t]
    \centering
    \includegraphics[width=.65\linewidth]{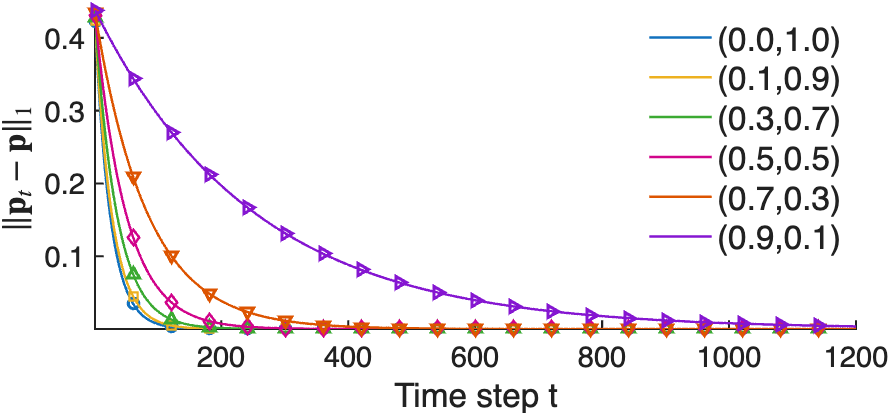}
    \caption{Mixing curves of the projected node-level dynamics for
    non-uniform broadcasting MERWs under different weights
    $\lambda_2$ and $\lambda_3$.}
    \label{fig:broadcast-mixing-weights}
\end{figure}

We next consider a non-uniform broadcasting model with two layers:
a pairwise layer ($k=2$) and a $3$-uniform layer ($k=3$).
{ We retain the $3$-uniform layer described above and add the
pairwise layer shown in Fig.~\ref{fig:broadcast-pairwise-topology}.}
We keep the same target distribution $\textbf{p}$. We compute the
maximum-entropy broadcasting transition tensor for each layer
jointly for each weight pair and obtain the projected
transition kernel by combining the induced node-level dynamics through
the convex mixture
\[
\textbf{P}(\lambda_2,\lambda_3)
=
\lambda_2\,\textbf{P}^{(2)}
+
\lambda_3\,\textbf{P}^{(3)},
\qquad
\lambda_2+\lambda_3=1,
\]
 where $\textbf{P}^{(2)}$ and $\textbf{P}^{(3)}$ denote the
projected layer contributions for the corresponding weight pair.
Fig.~\ref{fig:broadcast-mixing-weights} compares the mixing curves for
different choices of $\lambda_2$ and $\lambda_3$, demonstrating that
$\textbf{p}_t$ converges to the stationary distribution $\textbf{p}$.
In this example, larger values of $\lambda_3$ lead to faster
relaxation because the $3$-uniform layer provides the two cross-group
hyperedges that are absent from the pairwise layer.
Beyond encoding higher-order structure, the proposed framework reveals
how different interaction modes shape the transient dynamics while
remaining consistent with the same stationary distribution.

\subsection{Merging MERWs}

{ In this example, we consider merging on an $8$-node directed
 non-uniform hypergraph with a pairwise layer ($k=2$) and a
$3$-uniform layer ($k=3$), where each layer-$k$ hyperedge has
$k-1$ tail nodes and one head node. We use the same underlying non-uniform hypergraph supports as in the broadcasting example, with the hyperedge orientation changed to the merging form. For example, the triple $\{v_1,v_2,v_3\}$ generates three directed hyperedges $\{v_2,v_3\}\rightarrow v_1$, $\{v_1,v_3\}\rightarrow v_2$, and $\{v_1,v_2\}\rightarrow v_3$.} We prescribe the stationary
distribution
\begin{equation*}
\textbf{p}=
\begin{bmatrix}
0.172 & 0.226 & 0.003 & 0.146 & 0.102 & 0.081 & 0.115 & 0.156
\end{bmatrix}^{\top}
\end{equation*}
such that $\textbf{1}^{\top}\textbf{p}=1$.
{ The pairwise and $3$-uniform merging layers are specified by
the adjacency matrix $\mathscr{A}^{(2)}\in\mathbb{R}^{8\times 8}$ and the back-symmetric
adjacency tensor $\mathscr{A}^{(3)}\in\mathbb{R}^{8\times 8\times 8}$, respectively, whose nonzero
entries define the corresponding layer supports. 
The complete adjacency arrays are provided with the accompanying code and data. We compute the maximum-entropy merging transition
tensors jointly by solving
Problem~{\ref{prob:nonuniform-merge}}
{ for $(\lambda_2,\lambda_3)=(0.5,0.5)$}.}
{ The resulting pairwise transition matrix is 
\begin{equation*}
\begin{split}
\mathscr{M}^{(2)\star}
&=
\begin{bmatrix}
0.575 & 0.306 & 0     & 0     & 0     & 0.015 & 0     & 0.104\\
0.203 & 0.771 & 0     & 0.026 & 0     & 0     & 0     & 0\\
0     & 0.176 & 0.485 & 0     & 0.171 & 0     & 0.169 & 0\\
0     & 0.249 & 0     & 0.509 & 0.035 & 0.039 & 0     & 0.167\\
0     & 0     & 0     & 0.244 & 0.756 & 0     & 0     & 0\\
0.294 & 0     & 0     & 0.195 & 0     & 0.511 & 0     & 0\\
0     & 0     & 0     & 0     & 0     & 0     & 1.000 & 0\\
0.223 & 0     & 0     & 0.075 & 0     & 0     & 0     & 0.702
\end{bmatrix}.
\end{split}
\end{equation*}
Due to space limitations, the corresponding 3-uniform transition tensor $\mathscr{M}^{(3)\star}\in\mathbb{R}^{8\times 8\times 8}$ is omitted here and provided with the accompanying code and data.}

\black
We next vary the weights in this non-uniform merging model
 while keeping the two layer supports and the same target
distribution $\textbf{p}$. We compute the merging transition tensors
for both layers ($\mathscr{M}^{(2)}$ and $\mathscr{M}^{(3)}$)
 jointly for each strictly positive weight pair and combine
 their induced node-level maps as
\begin{equation*}
\textbf{p}_{t+1}
=
\lambda_2
\big(\mathscr{M}^{(2)}\big)^{\top}\textbf{p}_t
+
\lambda_3
\mathscr{M}^{(3)}
\times_{1,2}\{\textbf{p}_t,\textbf{p}_t\},
\qquad
\lambda_2+\lambda_3=1.
\end{equation*}
The choice $(\lambda_2,\lambda_3)=(0,1)$ is treated separately as the
pure $k=3$ case. Fig.~\ref{fig:merge-mixing-weights} compares the mixing curves for
different choices of $\lambda_2$ and $\lambda_3$, showing behavior
similar to the broadcasting example. {The trajectories
$\|\textbf{p}_t-\textbf{p}\|_1$, all initialized from the common
distribution $\textbf{p}_0=\frac{1}{8}\textbf{1}$, converge as the
time step increases.  For the tested weight choices, the sufficient condition in Lemma 3 is not satisfied, since the computed bounds range from $1.069305$ to $1.689073$. This condition is sufficient but not necessary for convergence, and the trajectories nevertheless converge numerically to the prescribed stationary distribution from the stated initialization.} Compared with broadcasting, the dependence on
$(\lambda_2,\lambda_3)$ is weaker in the merging case, as multiple
sources combine to produce a genuinely nonlinear macroscopic law.

\begin{figure}[t]
\centering
\includegraphics[width=.65\linewidth]{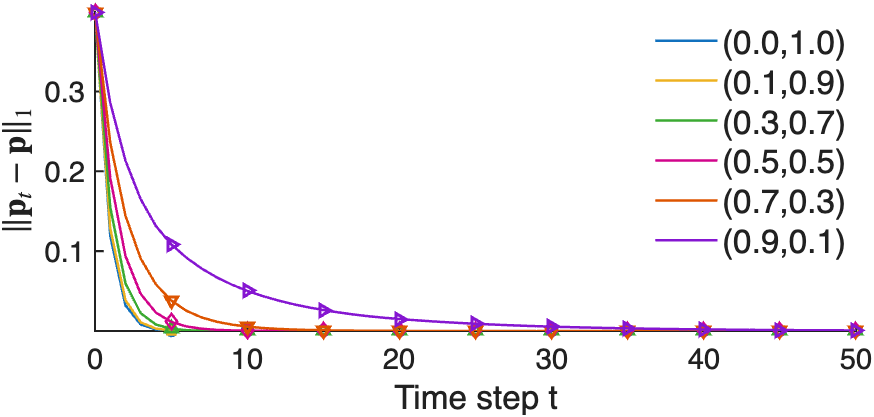}
\caption{Mixing curves of the node-level dynamics for non-uniform
merging MERWs under different weights $\lambda_2$ and $\lambda_3$.}
\label{fig:merge-mixing-weights}
\end{figure}

\subsection{MovieLens Next-Item Prediction}
We use the MovieLens dataset \cite{harper2015movielens} to evaluate whether the proposed merging MERW construction yields a meaningful random walk on real interaction data. { The task is to rank the next movie $m_t$ among 500 candidates, given the preceding two movies $(m_{t-2},m_{t-1})$ in a user's history.} Specifically, we extract merging events from time-ordered user histories, infer a merging transition tensor on an earlier portion of the data via the KL projection with a prescribed stationary distribution, and then evaluate the induced conditional law on a held-out portion of the same user histories. Similar experimental setups appear in \cite{benson2017spacey}. { The node set consists of the 500 movies with the highest rating counts, and we retain users with at least 30 interactions after this filtering. For each user, ratings are sorted by timestamp, with the first $80\%$ used for training and the remaining $20\%$ for testing. A sliding window of length three is then applied, so that each triple $(m_{t-2},m_{t-1},m_t)$ generates one $k=3$ event $({m_{t-2},m_{t-1}}\to m_t)$. Thus, the first two movies form the context and the third is the next-movie target. For example, four consecutive movies $(a,b,c,d)$ generate the events $({a,b}\to c)$ and $({b,c}\to d)$. This procedure yields 30,954 training events and 8,127 test events.}

A directed hyperedge represents a merging event extracted from a user's time-ordered interactions. This construction aligns with the merging interpretation, where an unordered input pair resolves to a single output node. Since the merging kernel is back-symmetric in its tail indices, each context pair is stored in sorted order, and $\mathscr{M}_{j_1j_2j_3}$ is interpreted as the conditional probability of output $j_3$ given the context ${j_1,j_2}$. { The prescribed stationary distribution $\textbf{p}$ is the normalized empirical frequency distribution of the receiver movies $m_t$ in the training events. For each context observed in training, the adjacency support includes the receivers observed for that pair together with up to 200 of the most frequent training receivers associated with either tail movie. The reference entries are then normalized over the receiver mode, while a sparse tail-based support is used for unobserved context pairs.} At this scale, contexts of length $k-1$ have limited repeat coverage for $k\geq4$. We therefore report results for $k=3$, the smallest nontrivial merging model with meaningful out-of-sample coverage.

We evaluate MERW against two baselines. { The popularity baseline ignores context and ranks movies according to the stationary distribution, whereas the Lazy RW baseline uses only the last movie in the context. MERW ranks candidates directly using the conditional distribution $\mathscr{M}^{\star}_{j_1j_2:}$, and all methods rank the same 500 candidate movies.} For each test event, we report the top-$L$ hit rate, defined as the fraction of events for which the realized next movie appears among the top-$L$ candidates under each method. { We consider two evaluation subsets. The seen-context subset contains test events whose context appears in the training data, while the seen-edge subset further requires that the realized next movie belongs to the data-derived adjacency support for that context. Of the 8,127 test events, 3,000 have seen contexts, of which 1,856 satisfy the seen-edge criterion. The seen-context subset therefore provides the broader out-of-sample evaluation, while the seen-edge subset evaluates ranking performance conditional on support coverage.}

Table~\ref{tab:movielens-seenedge} reports results on the seen-edge subset, and Table~\ref{tab:movielens-seenctx} reports results on the seen-context subset. { Across all reported values of $L$, MERW achieves the highest top-$L$ hit rate on both subsets. On the seen-edge subset, MERW improves H@10 from $0.1735$ for Lazy RW to $0.2295$ and H@100 from $0.5603$ to $0.7893$. A similar improvement is observed on the broader seen-context subset, where MERW increases H@10 from $0.1073$ to $0.1420$ and H@100 from $0.3870$ to $0.4883$. Both context-dependent methods substantially outperform the popularity baseline, demonstrating the predictive value of recent interaction history, while the consistent improvement of MERW over Lazy RW indicates the additional benefit of using the joint two-movie context rather than only the most recent movie. The higher hit rates in Table \ref{tab:movielens-seenedge} are expected because the seen-edge evaluation is restricted to events whose realized next movie lies within the data-derived support, whereas Table \ref{tab:movielens-seenctx} also includes seen-context events whose realized next movie falls outside this support, making the prediction task more challenging. Nevertheless, MERW maintains the highest hit rate for every reported value of $L$, demonstrating that its advantage is consistent across both evaluation subsets and across different recommendation-list sizes.}

\begin{table}[t!]
\centering
\caption{ MovieLens ($k=3$), seen-edge test events ($N=1856$). This subset comprises events whose context appears in the learned context map and whose output lies within the data-derived adjacency support.}
\label{tab:movielens-seenedge}
\setlength{\tabcolsep}{5pt}
\renewcommand{\arraystretch}{1.12}
\begin{tabular}{lccccc}
\hline
Method & H@10$\uparrow$ & H@20$\uparrow$ & H@30$\uparrow$ & H@40$\uparrow$ & H@100$\uparrow$ \\
\hline
Popularity  & { 0.0463} & { 0.0717} & { 0.1115} & { 0.1492} & { 0.3330} \\
Lazy RW     & { 0.1735} & { 0.2629} & { 0.3319} & { 0.3917} & { 0.5603} \\

MERW        & { \textbf{0.2295}} & { \textbf{0.3400}} & { \textbf{0.4186}} & { \textbf{0.4844}} & { \textbf{0.7893}} \\
\hline
\end{tabular}
\end{table}
\begin{table}[t!]
\centering
\caption{ MovieLens ($k=3$), seen-context test events ($N=3000$). This subset comprises events whose context appears in the { training data}, regardless of whether the realized output lies within the data-derived adjacency support.}
\label{tab:movielens-seenctx}
\setlength{\tabcolsep}{5pt}
\renewcommand{\arraystretch}{1.12}
\begin{tabular}{lccccc}
\hline
Method & H@10$\uparrow$ & H@20$\uparrow$ & H@30$\uparrow$ & H@40$\uparrow$ & H@100$\uparrow$ \\
\hline
Popularity & { 0.0320} & { 0.0510} & { 0.0800} & { 0.1090} & { 0.2620} \\
Lazy RW    & { 0.1073} & { 0.1627} & { 0.2063} & { 0.2480} & { 0.3870} \\

 MERW       & { \textbf{0.1420}} & { \textbf{0.2103}} & { \textbf{0.2590}} & { \textbf{0.2997}} & { \textbf{0.4883}} \\
\hline
\end{tabular}
\end{table}

\section{Discussion}\label{sec:dis}

In real-world hypergraphs such as metabolic networks \cite{chen2023teasing}, each hyperedge may contain multiple tail and head nodes. This motivates a many-to-many MERW in which a group of input nodes jointly produces a group of output nodes. For simplicity, we consider $k$-uniform hypergraphs. Let $r \ge 1$ and $s \ge 1$ denote the numbers of tail and head nodes, respectively, with $k = r + s$. The transition mechanism is described by a nonnegative order-$k$ tensor
\begin{equation*}
    \mathscr{E}^{(r,s)}\in\mathbb{R}_{+}^{(n \times n \times \stackrel{r}{\cdots} \times n)\times (n \times n \times \stackrel{s}{\cdots} \times n)}
\end{equation*}
whose first $r$ indices correspond to the tail nodes and whose remaining $s$ indices correspond to the head nodes. The entry $\mathscr{E}^{(r,s)}_{j_1j_2 \dots j_r j_{r+1} \dots j_{r+s}}$ specifies the probability associated with the transition from the pivot (tail) tuple $(j_1,j_2,\dots,j_r)$ to the receiver (head) tuple $(j_{r+1},j_{r+2},\dots,j_{r+s})$. To reflect the indistinguishability of nodes within each group, the tensor is assumed to be invariant under permutations of the first $r$ indices and separately invariant under permutations of the last $s$ indices. Stochasticity is imposed fiberwise so that for every fixed pivot indices
\begin{equation*}
\mathscr{E}^{(r,s)} \times_{r+1,r+2,\dots,k}
\{\textbf{1},\textbf{1},\stackrel{s}{\dots},\textbf{1}\}
= \mathscr{I},
\end{equation*}
where $\mathscr{I}$ denotes the all-one tensor. 
Let $\textbf{p}_t \in \mathbb{R}^n_{+}$ with $\textbf{1}^\top \textbf{p}_t = 1$ denote the node marginal at time $t$. One step samples $r$ pivot nodes independently from $\textbf{p}_t$, draws a receiver set according to $\mathscr{E}^{(r,s)}$, and then selects the next node uniformly from the sampled receiver set. The induced node marginal recursion is
\begin{equation*}
\textbf{p}_{t+1}
=
\mathscr{E}^{(r,s)}
\times_{1,2,\dots,r}
\{\textbf{p}_t,\textbf{p}_t,\stackrel{r}{\dots},\textbf{p}_t\}
\times_{r+2,r+3,\dots,k}
\{\textbf{1},\textbf{1},\stackrel{s}{\dots},\textbf{1}\}.
\end{equation*}
A prescribed stationary distribution $\textbf{p}\in\mathbb{R}^n_{++}$ is imposed by substituting $\textbf{p}_t = \textbf{p}$ into the recursion above, which yields linear constraints in $\mathscr{E}^{(r,s)}$. Consequently, maximum-entropy inference remains a convex KL projection on an order-$k$ tensor. When $r = 1$, the recursion becomes linear in $\textbf{p}_t$ and reduces to a broadcasting-type update after projection. When $r \ge 2$, the recursion becomes a homogeneous polynomial map of degree $r$ on the simplex, consistent with the merging mechanism.

\section{Conclusion}\label{sec:conclusion}

In this article, we propose a maximum-entropy random walk (MERW) framework on directed hypergraphs that models transitions directly through directed hyperedges. The transition mechanism is inferred via a KL projection: among all transition tensors supported on the hypergraph, we select the one closest to a reference tensor while enforcing stochasticity and consistency with a prescribed stationary distribution on nodes. Two representative mechanisms were analyzed. In the broadcasting setting, the induced node marginal dynamics reduces to a linear Markov recursion governed by a projected transition kernel, which allows standard analysis of ergodicity and mixing behavior. In contrast, the merging setting produces a nonlinear polynomial map on the probability simplex. In this case, strong ergodicity can be characterized through contraction conditions for polynomial stochastic operators. In both settings, the optimality conditions lead to multiplicative factorizations that admit Sinkhorn–Schr\"odinger type scaling iterations based on tensor contractions, extending entropy projection and scaling techniques beyond the matrix setting.

Several avenues for future work emerge. {First, a deeper theoretical analysis of the nonlinear dynamics in merging and many-to-many interactions could yield sharper conditions for convergence, stability, and mixing rates.} Second, developing scalable algorithms for large-scale hypergraphs, including stochastic, parallel, or distributed tensor scaling methods, would enable application to massive interaction datasets. Third, learning reference tensors or structural priors directly from data could enhance flexibility and predictive power, especially when the hypergraph structure is partially observed or noisy. Finally, extending the framework to new domains, such as biochemical reaction networks, higher-order social interactions, and sequential recommendation systems, offers the opportunity to exploit the full richness of directed hypergraph dynamics in real-world applications.

\section*{Acknowledgment}
Anqi Dong and Can Chen would like to thank Prof. Karl H. Johansson, Prof. Johan Karlsson, and Prof. Tryphon T. Georgiou for their support and encouragement. The authors would also like to thank the anonymous referees for their constructive comments, which significantly improved this article.

\bibliographystyle{unsrt}
\bibliography{references}

% Optional journal biographies are retained separately.
% \input{author_biographies}
\end{document}